\newtheorem{proposition}{Proposition}
\newtheorem{remark}{Remark}
\newcommand{\btheta}{\boldsymbol{\theta}}
\newcommand{\ba}{\mathbf{a}}
\newcommand{\bb}{\mathbf{b}}
\newcommand{\bc}{\mathbf{c}}
\newcommand{\bY}{\mathbf{Y}}
\newcommand{\bX}{\mathbf{X}}
\newcommand{\bbX}{\mathbb{X}}
\newcommand{\bO}{\boldsymbol{\omega}}
\newcommand{\bZ}{\mathbf{Z}}
\newcommand{\E}{\mathbb{E}}
\newcommand{\new}[1]{\textcolor{black}{#1}}
\newcommand{\nnew}[1]{\textcolor{black}{#1}}
\title{Post-processing multi-ensemble temperature and precipitation forecasts 
through an Exchangeable Gamma Normal model and its Tobit extension}
\author{ Marie Courbariaux$^1$
\and Pierre Barbillon$^1$\footnote{corresponding author: pierre.barbillon@agroparistech.fr} \and Luc Perreault$^2$ \and \'Eric Parent$^1$\\$^1$UMR MIA-Paris, AgroParisTech, INRA,\\ Universit\'e Paris-Saclay, 75005, Paris, France \\$^2$Hydro-Québec Research Institute, Varennes, Canada }
\begin{document}

\maketitle


\begin{abstract}

Meteorological ensemble members are a collection of scenarios for future weather issued by a meteorological center. Such ensembles nowadays form the main source of valuable information for probabilistic forecasting which aims at producing a predictive probability distribution of the quantity of interest instead of a single best guess point-wise estimate. Unfortunately, ensemble members cannot generally be considered as a sample from such a predictive probability distribution without a preliminary post-processing treatment to re-calibrate the ensemble. Two main families of post-processing methods, either competing such as the BMA or collaborative such as the EMOS, can be found in the literature. This paper proposes a mixed effect model belonging to the collaborative family. The structure of the model is formally justified by Bruno de Finetti's representation theorem which shows how to construct operational statistical models of ensemble based on judgments of invariance under the relabeling of the members. Its interesting specificities are as follows:
  1) exchangeability contributes to parsimony, with an interpretation of the latent pivot of the ensemble in terms of a statistical synthesis of the essential meteorological features of the ensemble members,
  2) a multi-ensemble implementation is straightforward, allowing to take advantage of various information so as to increase the sharpness of the forecasting procedure.
 Focus is cast onto Normal statistical structures, first with a direct application for temperatures, then with its very convenient Tobit extension for precipitation. 
 Inference is performed by Expectation Maximization (EM) algorithms with both steps leading to explicit analytic
 expressions in the Gaussian temperature case and recourse is made to stochastic conditional simulations in the zero-inflated precipitation case. 
 After checking its good behavior on artificial data, the proposed post-processing technique is applied to temperature and precipitation ensemble forecasts produced for lead times from 1 to 9 days over five river basins managed by Hydro-Québec, which ranks among the world’s largest electric companies. These ensemble forecasts, provided by three meteorological global forecast centres (Canadian, US and European), were extracted from the THORPEX Interactive Grand Global Ensemble (TIGGE) database. The results indicate that post-processed ensembles are calibrated and generally sharper than the raw ensembles for the five watersheds under study.

\end{abstract}

\noindent\textsc{Keywords}: {Hierarchical latent variable models, EM algorithms, Ensemble numerical weather prediction, Statistical post-processing, Temperature, Precipitation
}

\section{Introduction}

Rather than a single scenario (\textit {i.e. a deterministic prediction}),
meteorological services are now delivering a full collection of scenarios (that are called \textit {the members of the ensemble}) as an attempt to depict their knowledge as well as their uncertainty about the future state of the atmosphere.
The members of the ensemble are obtained by introducing perturbations into the initial conditions or the parametrization of a numerical weather prediction model (NWP), or by using several NWP models.
As an example, the European Center for Medium-range Weather Forecasts (ECMWF) generates the 50 members of the ECMWF-EPS (Ensemble Prediction System)
by launching every day 50 runs of their NWP model with different initial conditions as described in \citep {buizza2008potential}.

One may expect that the ensemble members $X_{k,t,h}, k=1\ldots K$, \new{issued at time $t-h$}, will behave as a $K$-sample from the probabilistic forecast of the variable $\new{Y_{t}}$  
to be predicted $h$ days ahead (e.g. temperature, wind, precipitation and so on).
This hypothesis of \textit{calibration} or \textit{reliability} for meteorological ensembles is sometimes assumed by weather forecasts users such as hydropower companies.  
However this is generally not tenable for most ensemble prediction systems. In fact, as pointed out by, among others, \cite{hamill1997verification,bougeault2010thorpex, taillardat2016calibrated} and \cite {perreault2017posttraitement}, ensemble weather forecasts are subject to bias and tend to be underdispersive. 
\new{In the following, the index $h$ will be omitted and we will write $X_{k,t}$ instead of $X_{k,t,h}$ since the lead time $h$ is considered  fixed.}

As an illustration, the three top panels of Figure~\ref{Fig-HistRgM2VLCens} show the rank histograms for 3-days ahead precipitation ensemble forecasts produced in 2014 for Manic 2 watershed by the European (ECMWF-EPS), 
Canadian (CMC\footnote{CMC: Canadian Meteorological Center.}-EPS) and US (NCEP-GEFS\footnote{NCEP: National Center for Environmental Prediction; GEFS: Global Ensemble Forecast System.}) 
meteorological services. The marked deviations from the uniform distribution of these histograms show that these ensembles are all far from being calibrated: the raw ensembles are biased and under-dispersed \citep{hamill2001interpretation}. It is therefore necessary to carry out some form of statistical post-processing on the ensemble members in order to simulate a new $K$-sample, which will then be well calibrated. Several methods of statistical post-processing have been proposed to improve ensemble weather forecasts. Two main strategies can be found in the literature: the members will either compete or collaborate.

\begin{figure}[!h]
\centering 

 \includegraphics[scale=0.7]{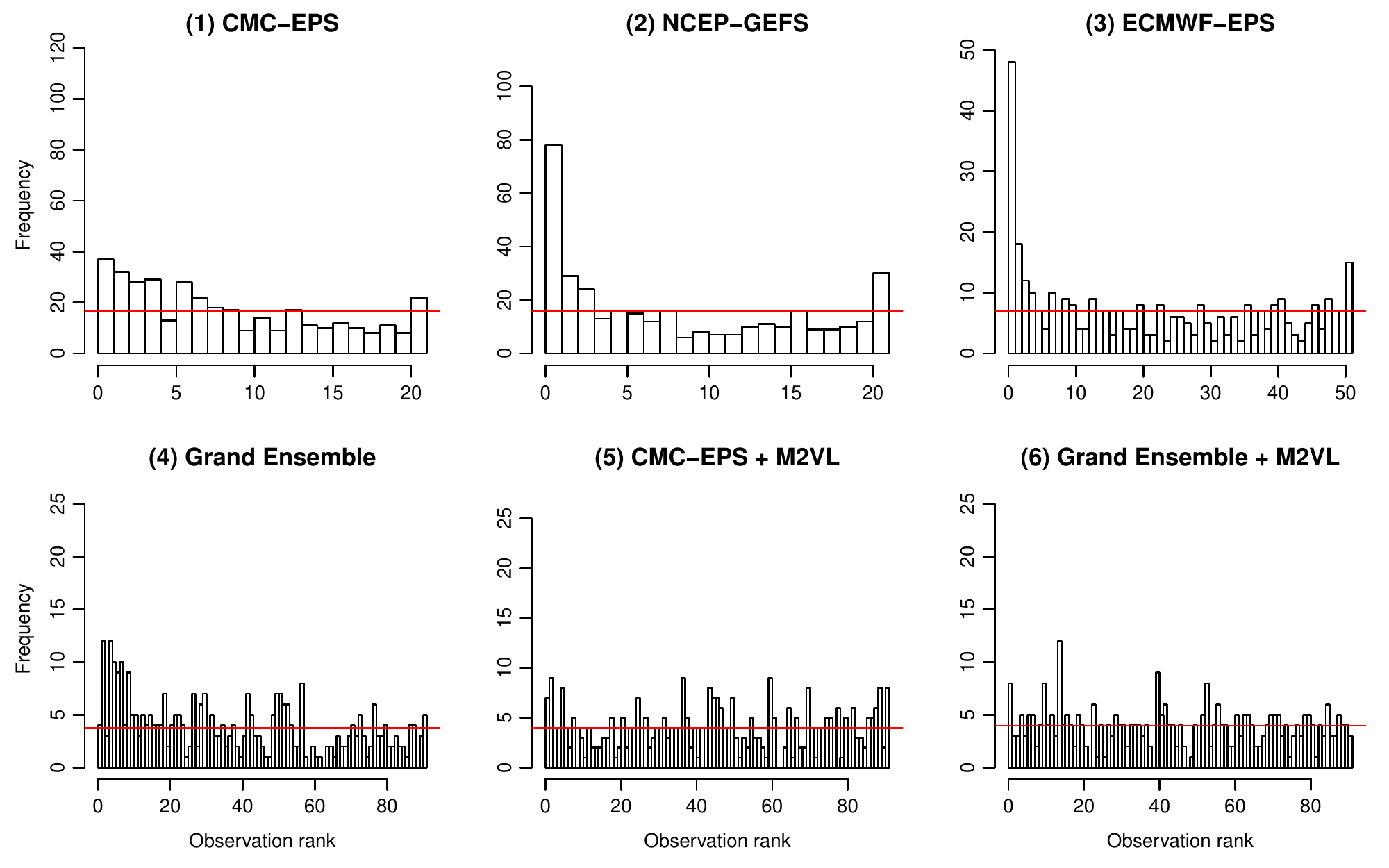}
  \caption{
Rank histograms for 3-days lead time ensemble precipitation forecasts produced daily in 2014 for Manic 2 watershed: (1) CMC-EPS raw ensemble (2) NCEP-GEFS raw ensemble (3) ECMWF-EPS raw ensemble (4) large raw ensemble gathering CMC-EPS, NCEP-GEFS and ECMWF-EPS members (5) post-processed CMC-EPS (6) post-processed large ensemble.}
 \label{Fig-HistRgM2VLCens}
\end{figure}

The first family considers that each member $k$ in itself can be the potential `\textit{truth}' $X_{k,t}$ for the quantity $Y_t$ to forecast. As if the $K$ members of the ensemble were competing, the idea is to identify statistically some \textit{best} member $k^*$ and try to generate ensemble forecasts in the vicinity of $X_{k^*,t}$. To provide an adjusted predictive distribution, one generally relies on weighted smoothing kernels over the ensemble members.
The statistical structure underpinning most methods from this family is as follows: for any $t$ and $k$,
\begin{align}
\label{Eq-BMA}
\left(Y_t|\bX_{t},k^\star \right) & \sim\mathcal{G}\left(X_{k^\star,t},\boldsymbol{\theta}_{k^\star}\right)\\
Pr(k^\star=k)  & =\pi_k \nonumber \; 
\end{align}
where $\bX_t=\{X_{1,t},\ldots,X_{K,t}\}$, $\sum_{k=1}^{K}\pi_k=1$, $\mathcal{G}$ is a probability distribution function (pdf) to be chosen,
$\boldsymbol{\theta}_k$ and $\pi_k$, for $k \in \left\{1,\ldots ,K\right\}$, are parameters to be estimated and $k^\star$ is the index of the (unknown) best member. As it is usually done, post-processing is applied independently for each lead time, the temporal dependency being reconstructed using empirical copulas.
The most famous method in the competing family is the Bayesian Model Averaging (BMA)  proposed by \cite{raftery2005using}. They considered the Gaussian case, taking $\mathcal{G}$
under the following form: for any $t$ and $k$,
\begin{equation*}
\mathcal{G}\left(X_{k,t},\boldsymbol{\theta}_{k}\right)=
\mathcal{N}\left(a_{k}X_{k,t}+b_{k},\sigma^2\right),
\end{equation*}
with $\boldsymbol{\theta}_{k}=\left(a_{k},b_{k},\sigma\right)$ 
for $k \in \left\{1,\ldots ,K\right\}$. 
The EM (Expectation-Maximization) algorithm is used for inference and the resulting predictive distribution in this case is a mixture of Gaussian distributions. Of course, this is a way of re-assembling the members to work together for prediction at the end, but this collaboration is performed with uneven weights, at least in principle. Consequently the BMA can also be understood as a post-processing \textit{dressing} method with a Gaussian kernel. This method is not without drawback when the ensemble is over-dispersive as shown by a simulated data experiment conducted by \cite{raftery2005using}. However this case rarely happens in practice since ensemble weather prediction systems tend to produce overconfident forecasts: the resulting ensemble members are generally less dispersed than they should be.
Many extensions of the BMA have been proposed, changing the method of inference or the distribution of the ensemble. As an example for precipitation \cite{sloughter2007probabilistic} proposed a power transformation ($1/3$) and the following Bernoulli-Gamma model for $\mathcal{G}$ so as to deal with the case of no rain: for any $t$,
\begin{center}
$\begin{cases}
\text{logit}\left(  Pr(Y_t^{\frac{1}{3}}=0|k^\star=k,X_{k,t})\right)   & =d_{0}+d_{1}%
X_{k,t}^{\frac{1}{3}}+d_{2}\mathbb{I}_{\left(  X_{k,t}=0\right)  }\\
\left(  Y_t^{\frac{1}{3}}|k^\star=k,X_{k,t},Y_t>0\right)   & \sim\Gamma\left(  \alpha\left(
X_{k,t}\right)  ,\beta\left(  X_{k,t}\right)  \right)  \\
\alpha\left(  X_{k,t}\right)  \text{ and }\beta\left(  X_{k,t}\right)   & \text{s.
t. }\frac{\alpha\left(  X_{k,t}\right)  }{\beta\left(  X_{k,t}\right)  }%
=b_{0}+b_{1}X_{k,t}^{\frac{1}{3}}\\
& \text{and  }\new{\frac{\alpha\left(  X_{k,t}\right)  }{\beta\left(  X_{k,t}\right)^2
}=c_{0}+c_{1}X_{k,t}}\\%
Pr(k^\star=k)  & =\frac{1}{K}  \; \forall k  \nonumber
\end{cases}
$
\end{center}
 where \new{$\mathbf{d}=(d_0,d_1,d_2)$, $\mathbf{b}=(b_0,b_1)$ and $\mathbf{c}=(c_0,c_1)$} are vectors for parameters to be estimated and $\Gamma\left(\alpha,\beta\right)$ stands for the gamma distribution with shape $\alpha$ and rate $\beta$. 
 
The second family of post-processing techniques for meteorological ensembles considers that members are not alternative individualized proposals for the quantity to be predicted, but rather a collection of scenarios sharing common traits that are identified as summaries of the future state of the system to be predicted. The predictive distribution is to be based on these shared features.
The Bayesian Processor of Output (BPO) \citep{krzysztofowicz2006bayesian} suggests to consider the joint distribution $(\bX_t,Y_t)$ (after a normal quantile transform) in order to derive the conditional distribution of $Y_t$ given $\bX_t$. Without any additional hypothesis on the form of covariance between members, this model lacks parsimony. To our view, this is a major drawback when considering the limited sample size of historical data available to learn about the distribution of the ensemble $\bX_t$.
A first step towards parsimony would assume that the weighted mean of the ensemble is the sufficient statistic for some parametric modeling of the predictive distribution (generally, the ordinary mean is used). One can then imagine treating the ensemble mean as a deterministic forecast and consider a bivariate Gaussian model to link this statistic with the quantity to forecast $Y_t$. 
The EMOS (Ensemble Model Output Statistics) method proposed by \cite{gneiting2005calibrated} additionally assumes that the ensemble dispersion also informs on the variability of the future meteorological state. The proposed predictive parametric model is as follows: for any $t$,
\begin{equation}
\left(Y_t|\bX_t\right)=a+\mathbf{b}^T \bX_t+\sqrt{c+d S_{\bX_t}^2} \varepsilon_t, \quad
\varepsilon_t\sim\mathcal{N}\left(0,1\right),
\label{eq:EMOS}
\end{equation}
where $a$, $c>0$, $d>0$ and $\mathbf{b}$ are parameters to be estimated, and $S_{\bX}$ denotes the standard deviation of the ensemble $\bX$
.
Conversely to the BMA, EMOS cannot provide a multimodal predictive distribution. One can also encounter post-processing methods for \new{ensemble precipitation forecasts} in the collaborative family, such as the EMOS-like model proposed by \cite{scheuerer2014probabilistic}: for any $t$ and $k$,
\begin{center}
$\begin{cases}
\left(Z_t|\bX_t \right) & \sim GEV\left(\mu_{\bX_t},\gamma+\kappa MD_{\bX_t},\xi\right)\\
\mu_{\bX_t}\textrm{ s.t. }\mathbb{E}\left(Z_t|\bX_t\right) &
=\alpha+\beta\bar{\bX_t}+\frac{1}{K}\sum_{k=1}^{k=K}\mathbb{I}_{X_{k,t}=0}\\
MD_{\bX_t} & =\frac{1}{K^{2}}\sum_{k,k'}\left|X_{k,t}-X_{k^\prime,t}\right|\\
\left(Y_t|Z_t\right) & =Z_t\mathbb{I}_{Z_t\geq0}
\end{cases}$
\par\end{center}
where $\alpha$, $\beta$, $\gamma$, $\kappa$ et $\xi$ are parameters to be estimated,
$MD_{\bX}$ is a measure of dispersion for the ensemble $\bX$, $Z_t$ is the latent non censored variable for the precipitation $Y_t$
and $GEV(\mu,\sigma,\xi)$ is the generalized extreme value distribution with location parameter
$\mu$, dispersion parameter $\sigma>0$ and shape parameter $\xi$.
\\
Finally, although they do not provide an explicit predictive function but allow for estimation of the quantiles, methods closely related to quantile regression and nonparametric regression can also be considered to belong to the family of collaborative post-processing techniques.  For instance, following the \new{logistic} regression works of \cite{wilks2009extending}, \cite{ben2013calibrated} proposes to add an interaction term in this post-processing technique and \cite{messner2014heteroscedastic} developed a heteroscedastic version in order to use the information contained in the dispersion of the members. \cite{taillardat2016calibrated} combined quantile regression with random forests (\cite{meinshausen2006quantile}), for the post-processing of ensemble temperatures and wind speed predictions.

The purpose of this paper is to formalize and develop a new collaborative post-processing technique in the vein of EMOS but allowing to incorporate the information conveyed by multiple ensembles into the analysis. 
\new{Such a collaborative post-processing approach avoids the main shortcomings of dressing methods (a poor adaptation to cases where the ensembles are \new{over}-dispersive) and of the non-parametric methods, which are not originally intended to issue complete predictive distributions.} 

\new{Multi-model ensemble prediction is increasingly used since number of studies have demonstrated that these forecasts have higher prediction skill than that of an individual model (\citet{tebaldi2007use}). Such  “grand ensemble” are usually considered for long term meteorological forecasting, namely for seasonal forecasts (\citet{khajehei2018effective}), and for hydrological forecasting. Even though using raw multiple sources of meteorological forecasts may improve reliability, they still lack of calibration. Except for the BMA approach which belongs to the competing family (\citet{fraley2010calibrating}), no parametric method formally addresses multi-ensemble forecasts post-processing. Especially, 
to the best of our knowledge, no collaborative statistical post-processing model using latent variables, such as the EMOS extension presented herein, have been proposed to explicitly calibrate multiple ensembles forecasts. This is confirmed by \citet{li2017review} in their recent review on statistical post-processing methods for hydrometeorological ensemble forecasting.}

In this paper, we focus on medium range forecasting of daily temperature and precipitation that are of utmost interest for hydropower companies, since these two quantities are the main inputs of the rainfall-runoff model used to produce streamflow predictions  
(\citet{guay2018hsami1,courbariaux2017water,garcon1996prevision}). 
Strategic decisions are taken daily on the basis of these forecasts, namely to prevent flooding damages and to avoid operating losses. The availability of reliable temperature and precipitation probabilistic forecasts in this context is therefore a crucial issue.

Being explicitly underpinned by the theory of exchangeability, the proposed technique will benefit from a statistical property of symmetry (invariance by relabeling)  that can be naturally expected from efficient ensemble simulators. 
The statistical framework of exchangeability is recalled in Section \ref{sec:Exchangeability}. Its
main interest lies in providing a theoretical justification for the representation of the ensemble as a sample 
from a mixed effect model. Figuring out the conditioning latent process as some salient configuration of the ensemble simulator for the day to predict can help to understand the mixed model in meteorological terms.
Section \ref{sec:Gaussian-case} develops exchangeable applications for the Gaussian case with a noticeable multidimensional
mixed model that might improve forecasting sharpness by taking into account the multiple sources of information conveyed by different
ensemble simulators or by forecasts issued from various meteorological experts. This first model is suitable for temperature ensemble forecasts. Section \ref{sec:Tobit-case} 
is devoted to the Tobit extension of the exchangeable Gaussian model to deal with the zero inflation of the precipitation distribution corresponding to days with no rain. A simulation study reported in Section~\ref{sec:Numerical-experiment} allows us to check that the stochastic expectation maximization algorithm proposed for inference works properly on artificial data. Section \ref{sec:Applications} presents results for temperature and precipitation forecasts based on combining the European, Canadian and US ensembles for five watersheds in Quebec, Canada. The results indicate that post-processed ensembles are much better calibrated and generally sharper than the raw ensembles for the watersheds under study. Section \ref{sec:Conclusion-and-discussion} provides a summary and discusses perspectives of future research.

\section{Exchangeability}
\label{sec:Exchangeability}
Suppose that the labeling of the ensemble members has no impact on our prior beliefs: their joint distribution will remain invariant to relabeling
the members. Such an hypothesis of exchangeability seems plausible: at least this would ideally represent the desiderata for the proficient meteorologist willing to tune his earth model initial-condition perturbations such that the ensemble members do not exhibit systematically persisting figures over time. \cite{fraley2010calibrating} rely on exchangeability to assume in Eq~\eqref{Eq-BMA} $\pi_k=\frac{1}{K}$ and $\boldsymbol{\theta}_k=\theta \quad \forall k\in \left\{1,\ldots ,K\right\}$, and so do other authors with respect their own favorite post-processing technique. But the concept of exchangeability should be farther exploited since it provides formal means to construct operational statistical models of ensemble based strictly on judgments of invariance under the relabeling of the members.

Consider a $K$-sample  $(X_1,X_2,\ldots,X_k,\ldots,X_K)$, such that there exists a random variable $Z$ (with pdf $g(.)$ ) allowing to write the joint distribution of the $X_k,\,k\in\{1,\ldots,K\}$ as a mixed effect model:
\begin{equation}
 f(x_{1:K})=\int_z  g(z) \left(\prod_{k=1}^{K} f(X_{k}=x_k|Z=z)\right) dz.
 \label{eq:xchgt}
\end{equation}
Given the random effects $Z$, the $X_k,\,k\in\{1,\ldots,K\}$, are independent and identically distributed according to $f(x\vert Z)$.  The joint distribution of the variables $(X_1,X_2,\ldots,X_k,\ldots,X_K)$,  remains invariant under a relabeling permutation of the components of the mixture: they are exchangeable. Bruno de Finetti's representation theorem (\cite{de1931funzione,de1937prevision}) and the work from his followers \citep{hewitt1955symmetric} prove the difficult reciprocal: under technical conditions of regularity valid for a theoretically infinite sequence of exchangeable members, exchangeability means conditional independence and yields to Eq~\eqref{eq:xchgt}.
Note that exchangeability does permit marginal dependence between members; for example in the Gaussian case, members must have the same mean and the same variance, but they can be correlated with one another, as long as all correlations are equal (and positive).
Exchangeability is especially important for  modeling ensemble members by its realism as well as its parsimony. Moreover, there exists a very strong \textit{a priori} argument in favor of a structured model of the ensemble members  (the $X_{k}'s$ in eq~\ref{Eq-BMA}) around a latent single conditioning variable that would explain their dependencies. It is the very objective of the simulation method of the hydrometeorological model of the earth system to target, through a given ensemble of exchangeable simulations, the estimation of a latent meaningful "physical" variable for all members. Such a $Z$ (in Eq~\eqref{eq:xchgt}) can be interpreted as a statistical synthesis of this latent variable common to the members of the ensemble. This interpretation is furthermore often confined by the strength of the first component obtained through a preliminary PCA (Principal Component Analysis) of the ensemble.\\ 
One may 
object that some physical processes to generate ensembles do not produce exchangeable members: for instance, even and odd numbered members from the Canadian ensemble (CMC-EPS) for precipitations are obtained by different meteorological models\footnote{\url{http://collaboration.cmc.ec.gc.ca/cmc/ensemble/doc/info_geps_e.pdf"}} 
(see second panel of Fig~\ref{Fig-ExManic2-raw} ) and would not pass the tests for accepting exchangeability for the whole CMC-EPS.\\ 
\new{The rank statistics  
lead to a possible exchangeability test for ensembles. 
A necessary condition for exchangeability is 
indeed that each member of the ensemble occupies all possible ranks (i.e. $\{1,\cdots,K\}$) in a roughly equal way. We therefore check that the frequency of occupancy of a rank does not deviate too far from $\frac{1}{K}$ for each member (with a $\chi^2$ test).
}\\
In the case exchangeability is not an acceptable assumption for the whole ensemble, one can subset the ensemble into exchangeable parts to be dealt with as new ensembles.
\new{We thus get back to the case of exchangeability of members belonging to different ensembles.}
This will not prevent using the models hereafter since, in this paper, we develop a method allowing to incorporate the information conveyed by multiple ensembles into the statistical analysis.

\section{Model and inference in the Gaussian case}
\label{sec:Gaussian-case}

As in most post-processing techniques, we consider univariate models to obtain a calibrated marginal distribution for each site, each meteorological variable (here, the temperature), and for each lead time. \new{Indeed, the ensemble is generally assumed to be a sufficient summary statistics as far as prediction is concerned.} The spatial, temporal and inter-variable dependencies are recovered by using empirical copulas.

In this section, we consider the simple situation in which the variable to be predicted, $Y_t$ 
as well as the corresponding ensemble of predictors, \new{$ \mathbf {X}_t $ (produced a time $t-h$, $h$ being the considered lead-time)}, can be assumed jointly (and conditionally) Gaussian. 
This is notably the case for daily temperature observations over a season but it may also concern other continuous variables after a suitable normalizing transformation.

\subsection{Multi-ensemble Exchangeable Gamma Normal Model}
Let $E$ be the number of forecast sources (e.g. the ensembles from several meteorological centers) and $K_e$ the number of members within ensemble $e$.
When the forecasting system delivers a single forecast, for instance in the case of an expert issuing a deterministic forecast, we simply set $K_e=1$.
From now on, we also make the convention that $e=0$ will denote the variable to be predicted \new{ and we will sometimes conveniently write: $X_{0,t}=Y_t$, with $K_{\text{0}}=1$, as if it were a peculiar ensemble with a single member. This notational trick will be useful for the inference part, where it makes sense since past observations of the target provide information about the unknown of our model just as ensemble members do. Of course, when forecasting, predictand $Y_t$ and predictors $X_{e,k,t}$ will keep their non symmetrical roles.} 
We propose the following model \new{for a given lead time and a given location: for any $e,k,t$,}

\begin{align}
\label{Eq-GammaNormal}
 \begin{cases}
\left(X_{e,k,t}|Z_t\right) & =a_{e}+b_{e}Z_t+c_{e}\varepsilon_{e,k,t} \, \\
 \new{\left(Y_t|Z_t\right)=\left(X_{0,t}|Z_t\right)} &  \new{=a_{0}+Z_t+\varepsilon_{0,1,t}} \\
\left(\varepsilon_{e,k,t}|\omega_t^{-2}\right) & \overset{ind}{\sim}\mathcal{N}\left(0,\omega_t^{2}\right)  \\
\left(Z_t|\omega_t^{-2}\right) & \overset{ind}{\sim}\mathcal{N}\left(0,\lambda\omega_t^{2}\right)\\
\omega_t^{-2} & \overset{iid}{\sim} \Gamma\left(\alpha,\beta\right)
\end{cases},
\end{align}
where $X_{e,k,t}$ denote the $k^{th}$ member of the ensemble $e$ \new{at time $t$}, 
$Z_t$ and $\omega_t^2$ are the \new{corresponding} latent variables (forming the bedrock of the exchangeability property) upon which the ensemble members of a given ensemble $e$ are conditionally independent. These latent variables $Z_t$ and $\omega_t^2$ are assumed to be \nnew{independent across time.} 
$\Gamma\left(\alpha,\beta\right)$ is the gamma distribution with parameters
$\alpha$ and $\beta$ where $\alpha$ and $\beta$, as well as $\lambda$ and 
$\left\{a_e,b_e,c_e\right\}_{e\in\left\{0,\ldots,E\right\}}$, are parameters to be estimated.  \new{These parameters are then specific to the considered lead-time and location}.  
Identifiability constraints impose $b_0=c_0=1$.
Figure~\ref{Fig-DAG} shows the corresponding directed acyclic graph (DAG) for the model.
\begin{figure}[!h]
\centering
 \includegraphics[viewport=450bp 25bp 820bp 345bp,clip,scale=0.3]{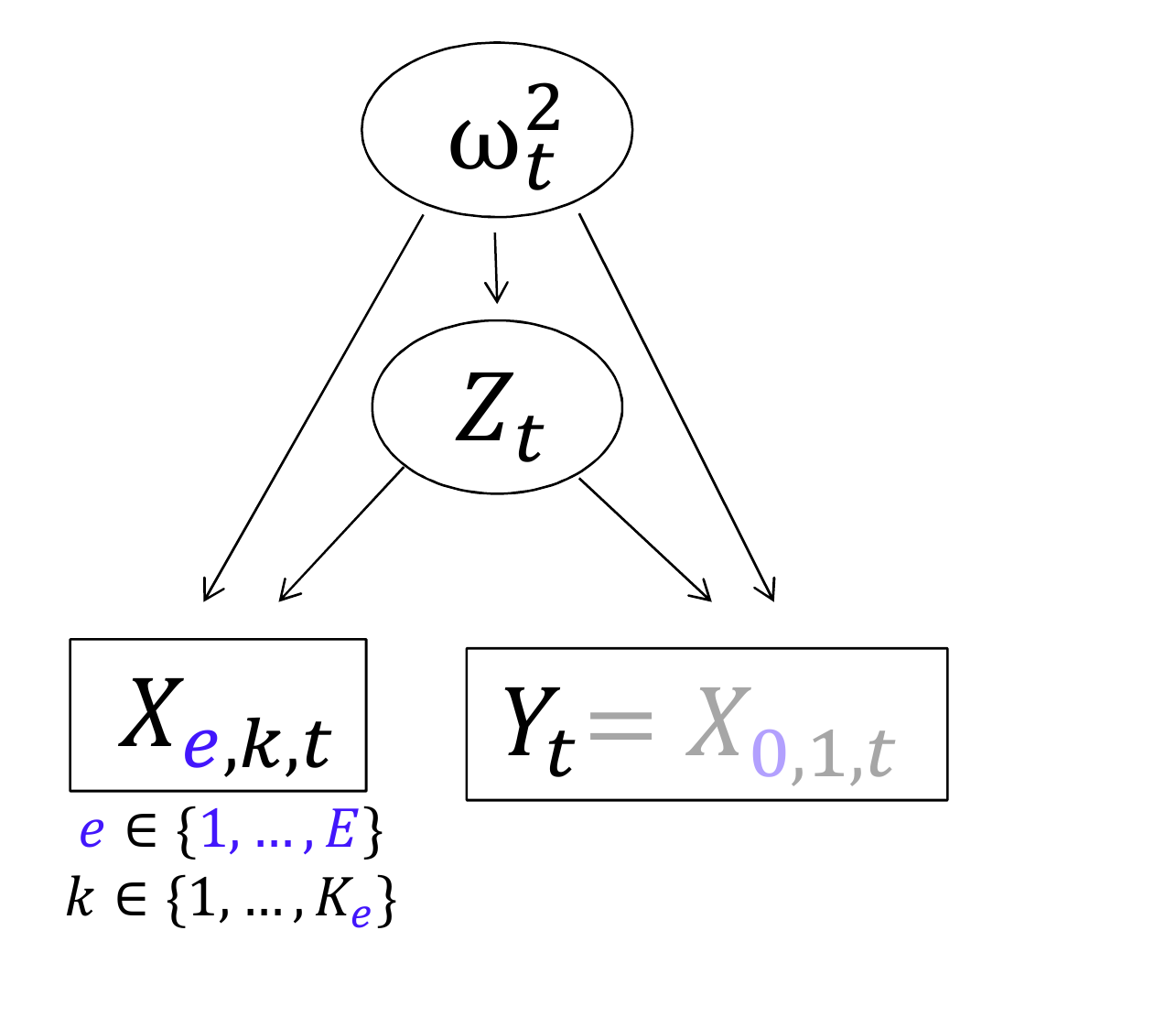}
 \caption{Directed acyclic graph of the model given by {Eq~\ref{Eq-GammaNormal}}.
 $X_{e,k,t}$ is the $k^{th}$ member of the ensemble $e$ \new{at time $t$} and $K_e$ 
 is the number of members from the ensemble $e$, $Y_t=X_{0,K_0=1,t}=X_{0,t}$ relates to the targeted variable to forecast, and $Z_t$ and $\omega_t^2$ are the model backbone latent variables. \new{Times are assumed to be independent.}}
 \label{Fig-DAG}
\end{figure}

The first latent variable, $Z_t$, can be interpreted as some \textit{hidden global state of the atmosphere}, as \textit{seen} 
by a meteorological simulator issuing a forecasting ensemble.
We make the additional assumption that this pivotal
quantity is the same for all ensembles $e=1,\ldots,E$. It makes sense to think that dispersed members yield a large uncertainty on the latent variable $Z_t$. 
\new{Since this dispersion is not constant over time, the second latent variable, $\omega_t^2$, is useful in the model to account for this variation.
}
\new{This is related to using of the variance term in the EMOS model \citep{gneiting2005calibrated}.}
A meteorological interpretation of this second latent variable
$\omega_t^2$ would be something like the underlaying  \textit{turbulent atmospheric condition} (as \textit{encoded} by all meteorological simulators). 
The \textit{adhoc} dependence between $Z_t$ and $\omega_t^2$ as specified by Eq~\ref{Eq-GammaNormal} greatly facilitates inference 
(through the use of Gamma-Normal conjugacy) \new{and therefore leads to a fast algorithm, which is useful in an operational context, where inference can be conducted within a moving window}.
The meaning of parameters $a,b,c,\alpha,\beta $ from the model given by Eq \ref{Eq-GammaNormal} is straightforward: 
\begin{itemize}
\item The difference  $a_{e}-a_{0},\, e>0 $ gives the additive bias for the forecasting ensemble $e$, to be compared to $0$.
\item The ratio $\frac{b_e}{b_0},\, e>0$ is the multiplicative bias of the forecasting ensemble $e$. Since $b_{0}=1$ for identifiability, the value ${b_e}$ is directly to be compared to $1$.
Additive and multiplicative biases may partly compensate one another.
\item For parameter $c$,  the ratio $\frac{c_e}{c_0}=c_{e},\, e>0$ ( parameter $c_0$ being fixed to $1$)  will be understood as a dispersion bias for the predictors.
A ratio greater than $1$ can be interpreted as an over-dispersion of the predicting ensemble $e$ .
\item The ratio $\frac{\beta}{\alpha-1}$ corresponds to the expected value of $\omega_t^2$ which rules how far the quantity to forecast $Y_t$ can occur from the latent variable $Z_t$. 
It is therefore expected that this ratio will increase with the lead time of the forecast, because ensembles generally become less and less informative when the forecasting horizon grows.\\
\end{itemize}

The model given in this section fulfills parsimony and integration of multiple sources of information: each additional ensemble only needs three parameters to be included within the multi-ensemble gamma Normal exchangeable model given by 
Eq~\eqref{Eq-GammaNormal}.

\subsection{Inference}

In what follows, the parameters to be estimated are denoted 
by $\btheta=(\alpha,\beta,\lambda,\ba,\bb,\bc)$ where $\ba=(a_e)_{e\in\{0,\ldots,E\}}$, $\bb=(b_e)_{e\in\{1,\ldots,E\}}$ and $\bc=(c_e)_{e\in\{1,\ldots,E\}}$ , recalling that $b_0=c_0=1$. 

We moreover use Gelfand's bracket notations for probability distributions \citep{gelfand1990sampling} and  we denote by 
$(\bbX,\bY)$ 
the set of predictors $\bX_t$ and observations $Y_t$ acquired over time during the learning period 
\new{and $\bZ$ and $\bO^{2}$ the sets of latent variables}.

Assuming that the parameters remain the same over a learning period close to or homogeneous to the prediction period, the EM algorithm \citep{dempster1977maximum} is an effective instrument for estimating the parameters of this multivariate normal model with random effects.
The E-step is tractable since, for any $t$, the conditional distribution 
$\left[Z_t,\omega_t^{-2}|\bX_t,Y_t;\btheta\right]$
follows a normal-gamma distribution. 
In the M-step, some parameter updatings have explicit formulas and another relies on a numerical optimization procedure. 
The proof and explicit formulas for updating the parameters are provided in Appendix \ref{sec:appendice:inference}.
We denote by 
$\btheta^{(h)}=\left(\alpha^{(h)},\beta^{(h)},\lambda^{(h)},\ba^{(h)},\bb^{(h)},\bc^{(h)}\right)$ the current value of the parameters.

\begin{algorithm}[H]
  \SetSideCommentLeft
  \DontPrintSemicolon
  \KwSty{Initialization:} Set $\btheta^{(0)}$ by a method of moments.\;

  \Repeat{$\left\|\btheta^{(h+1)} - \btheta^{(h)}\right\| < \varepsilon$}{
   
   \begin{enumerate}
   \item[E-step] Compute needed moments of latent variables in $\bZ$ and $\bO^2$ with respect to the current \\
   parameter values ($\btheta^{(h)}$).
   \item[M-step] Update the current parameter values:
   $$\btheta^{(h+1)}=
   \arg\max_{\btheta}\E_{[\bZ,\bO^{-2}|\bbX,\bY;\btheta^{(h)}]}\left(\log\left(\left[\bbX,\bY,\bZ,\bO^{-2};\btheta\right]\right)\right)$$
   \end{enumerate}
   
  }
 \caption{EM algorithm for estimating parameters in Model \eqref{Eq-GammaNormal}}
 \label{algo:EM:gaussian}
\end{algorithm}

\subsection{Forecasting}
For a new time $t^{\prime}$ with predictors $\mathbf{X}_{t'}$, the forecast for $Y_{t'}$ is provided by 
the predictive distribution of
$(Y_{t^\prime}|\mathbf{X}_{t'})$ \new{(the forecasting target in the operational systems considered here),} which is given in the next proposition.  

\begin{proposition}
\label{prop:forecastnormal}
Under Model \ref{Eq-GammaNormal}, for a new time $t'$, the predictive distribution $\left(Y_{t^\prime}|\mathbf{X}_{t'}=\mathbf{x}_{t'}\right)$ follows a Student distribution with scale parameter
$\sqrt{\frac{(\lambda^{\prime\prime}+1)\beta_{t^\prime}^{\prime\prime}}{\alpha^{\prime\prime}}}$, location parameter $a_{0}+m_{t^\prime}^{\prime\prime}$ and $2\alpha^{\prime\prime}$ degrees of freedom where
\begin{equation*}
\begin{array}{lcl}
\alpha^{\prime\prime} & =&\alpha+\frac{\sum_{e=1}^{E}K_{e}}{2},\\
\lambda^{\prime\prime-1} & =&\sum_{e=1}^{E}K_{e}b_{e}^{2}c_{e}^{-2}+\lambda^{-1},\\
m_{t^\prime}^{\prime\prime} & =&\lambda^{\prime\prime}\cdot\sum_{e=1}^{E}c_{e}^{-2}b_{e}K_{e}\left(\bar{x}_{e,t^\prime}-a_{e}\right),\\
\beta_{t^\prime}^{\prime\prime} & =&\beta+\frac{1}{2}
\left\{\sum_{e=1}^{E}\sum_{k=1}^{K_{e}}c_{e}^{-2}(x_{e,k,t^\prime}-a_{e})^{2}-
m_{t^\prime}^{\prime\prime2}\lambda^{\prime\prime-1}\right\}.
\end{array}
\end{equation*}
\end{proposition}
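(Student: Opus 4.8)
The plan is to exploit the Gaussian and Gamma conjugacies built into Model~\eqref{Eq-GammaNormal} in two stages. At forecasting time $t'$ only the predictors $\mathbf{X}_{t'}$ are observed, the target $Y_{t'}$ is not, and since the latent variables are independent across time the predictive law depends on the learning sample only through $\btheta$, which is treated as known. So I would first derive the posterior law of the latent pair $(Z_{t'},\omega_{t'}^{-2})$ given $\mathbf{X}_{t'}$, and then propagate it through the structural equation linking $Y_{t'}$ to $(Z_{t'},\omega_{t'}^{-2})$, marginalizing out the latents.

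\textbf{Step 1: posterior of the latent variables.} Conditioning on $\omega_{t'}^{-2}$, the first, third and fourth lines of~\eqref{Eq-GammaNormal} make each $X_{e,k,t'}$ a $\mathcal{N}(a_e+b_eZ_{t'},\,c_e^2\omega_{t'}^2)$ variate with prior $Z_{t'}\sim\mathcal{N}(0,\lambda\omega_{t'}^2)$, a Gaussian--Gaussian conjugate pair. Writing the joint density of $(\mathbf{X}_{t'},Z_{t'})$ given $\omega_{t'}^{-2}$ as proportional to $\exp\{-\tfrac12\omega_{t'}^{-2}Q(Z_{t'})\}$ with $Q(z)=\sum_{e,k}c_e^{-2}(x_{e,k,t'}-a_e-b_ez)^2+\lambda^{-1}z^2$, I would complete the square in $z$: the coefficient of $z^2$ is $\sum_eK_eb_e^2c_e^{-2}+\lambda^{-1}$, which I call $\lambda^{\prime\prime-1}$, and the coefficient of $z$ is $-2\sum_ec_e^{-2}b_eK_e(\bar x_{e,t'}-a_e)$, which I write $-2\lambda^{\prime\prime-1}m_{t'}^{\prime\prime}$; these two readings reproduce exactly the formulas for $\lambda^{\prime\prime}$ and $m_{t'}^{\prime\prime}$ in the statement, and they give $Q(z)=\lambda^{\prime\prime-1}(z-m_{t'}^{\prime\prime})^2+2(\beta_{t'}^{\prime\prime}-\beta)$ with the leftover constant matching $2(\beta_{t'}^{\prime\prime}-\beta)$. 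Hence $(Z_{t'}\mid\mathbf{X}_{t'},\omega_{t'}^{-2})\sim\mathcal{N}(m_{t'}^{\prime\prime},\lambda^{\prime\prime}\omega_{t'}^2)$. Integrating $z$ out produces a Gaussian normalizing constant; collecting the powers of $\omega_{t'}^{-2}$ --- one factor $(\omega_{t'}^{-2})^{1/2}$ from each of the $\sum_eK_e$ predictor densities, one from the prior on $Z_{t'}$, and $(\omega_{t'}^{-2})^{-1/2}$ from the $z$-integral --- leaves $[\mathbf{X}_{t'}\mid\omega_{t'}^{-2}]\propto(\omega_{t'}^{-2})^{(\sum_eK_e)/2}\exp\{-\omega_{t'}^{-2}(\beta_{t'}^{\prime\prime}-\beta)\}$ as a function of $\omega_{t'}^{-2}$. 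Multiplying by the $\Gamma(\alpha,\beta)$ prior yields $[\omega_{t'}^{-2}\mid\mathbf{X}_{t'}]\propto(\omega_{t'}^{-2})^{\alpha^{\prime\prime}-1}\exp\{-\beta_{t'}^{\prime\prime}\omega_{t'}^{-2}\}$, i.e.\ $(\omega_{t'}^{-2}\mid\mathbf{X}_{t'})\sim\Gamma(\alpha^{\prime\prime},\beta_{t'}^{\prime\prime})$ with $\alpha^{\prime\prime}=\alpha+\tfrac12\sum_eK_e$, so $(Z_{t'},\omega_{t'}^{-2}\mid\mathbf{X}_{t'})$ is Normal--Gamma. (Equivalently, this is the E-step computation of Appendix~\ref{sec:appendice:inference}, specialized by dropping the single $Y_t$ contribution.)

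\textbf{Step 2: predictive law of $Y_{t'}$.} By the second and third lines of~\eqref{Eq-GammaNormal}, $(Y_{t'}\mid Z_{t'},\omega_{t'}^{-2})\sim\mathcal{N}(a_0+Z_{t'},\omega_{t'}^2)$ and is conditionally independent of $\mathbf{X}_{t'}$ given the latents. Conditioning on $\omega_{t'}^{-2}$, $Y_{t'}$ is then the sum of the independent Gaussians $a_0+Z_{t'}\sim\mathcal{N}(a_0+m_{t'}^{\prime\prime},\lambda^{\prime\prime}\omega_{t'}^2)$ and $\varepsilon_{0,1,t'}\sim\mathcal{N}(0,\omega_{t'}^2)$, so $(Y_{t'}\mid\mathbf{X}_{t'},\omega_{t'}^{-2})\sim\mathcal{N}(a_0+m_{t'}^{\prime\prime},(\lambda^{\prime\prime}+1)\omega_{t'}^2)$. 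Finally I would integrate the precision $u:=\omega_{t'}^{-2}$ against its $\Gamma(\alpha^{\prime\prime},\beta_{t'}^{\prime\prime})$ posterior, the classical Normal--Gamma mixture: $\int_0^\infty u^{\alpha^{\prime\prime}-1/2}\exp\{-u(\beta_{t'}^{\prime\prime}+\tfrac{(y-a_0-m_{t'}^{\prime\prime})^2}{2(\lambda^{\prime\prime}+1)})\}\,du\propto(1+\tfrac{(y-a_0-m_{t'}^{\prime\prime})^2}{2(\lambda^{\prime\prime}+1)\beta_{t'}^{\prime\prime}})^{-(\alpha^{\prime\prime}+1/2)}$, which is the density of a Student distribution with $2\alpha^{\prime\prime}$ degrees of freedom, location $a_0+m_{t'}^{\prime\prime}$ and scale $\sqrt{(\lambda^{\prime\prime}+1)\beta_{t'}^{\prime\prime}/\alpha^{\prime\prime}}$ (matching exponent $(\nu+1)/2=\alpha^{\prime\prime}+1/2$ and quadratic coefficient $1/(\nu s^2)=1/[2(\lambda^{\prime\prime}+1)\beta_{t'}^{\prime\prime}]$). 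This is the assertion.

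\textbf{Main obstacle.} Everything reduces to routine Gaussian and Gamma algebra; the only point requiring genuine care is the bookkeeping in Step~1 --- tracking every power of $\omega_{t'}^{-2}$ coming out of the $\sum_eK_e$ predictor densities, the prior on $Z_{t'}$ and the $z$-integral so that the Gamma shape comes out exactly $\alpha+\tfrac12\sum_eK_e$, and checking that the constant produced by completing the square equals precisely $2(\beta_{t'}^{\prime\prime}-\beta)$. Once those two identifications are pinned down, the Normal--Gamma$\to$Student step is entirely standard.
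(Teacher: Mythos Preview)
Your proposal is correct and follows essentially the same route as the paper's proof: both first identify the Normal--Gamma posterior $(Z_{t'},\omega_{t'}^{-2}\mid\mathbf{X}_{t'})$ via the conjugacy already worked out in the E-step (Appendix~\ref{sec:appendice:inference}), then convolve with the Gaussian law of $Y_{t'}$ given the latents and marginalize the precision to recover the Student form. The only cosmetic difference is that the paper rescales by $\sqrt{\lambda''+1}$ before invoking the standard Normal--Gamma$\to$Student result, whereas you carry out the mixture integral directly; you also spell out the square-completion and power-counting that the paper simply defers to the appendix.
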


\begin{proof}
By using the conjugacy properties of the normal gamma model as in the E-step of Algorithm \ref{algo:EM:gaussian}, we obtain that 
 $\left(Z_{t'}|\omega^{-2}_{t'},\mathbf{X}_{t'}\right)$ follows a normal distribution 
$\mathcal{N}(m^{\prime\prime}_{t^\prime},\lambda^{\prime\prime}\omega_{t'}^{2})$ and
 $\left(\omega_{t'}^{-2}|\mathbf{X}_{t^\prime}\right)$ follows a gamma distribution
$\Gamma\left(\alpha^{\prime\prime},\beta_{t^\prime}^{\prime\prime}\right)$ where parameters are determined by identification.
Moreover, we have from Model \eqref{Eq-GammaNormal}:
\begin{eqnarray*}
 \left(Y_{t'}|Z_{t'}\right)&=&a_0+Z_{t'}+\varepsilon_{0,1,t'}\\
 \left(\varepsilon_{0,1,t'}|\omega_{t'}^2\right)&\sim&\mathcal{N}\left(0,\omega_{t'}^2\right)\,.
\end{eqnarray*}
Then, $\left(\frac{Y_{t'}}{\sqrt{\lambda^{\prime\prime}+1}}|\omega^2_{t'},\mathbf{X}_{t'}\right)\sim
\mathcal{N}\left(\frac{a_0+m_{t^\prime}^{\prime\prime}}{\sqrt{\lambda^{\prime\prime}+1}},\omega_{t'}^2\right)$
and by using the distribution of  $\left(\omega_{t'}^{-2}|\mathbf{X}_{t'}\right)$ we obtain the announced result.
\end{proof}

\begin{remark}
The predictive mean and the predictive variance are given respectively by:
\begin{equation}
\begin{array}{lcl}
\mathbb{E}\left(Y_{t'}|\mathbf{X}_{t'}=\mathbf{x}_{t'}\right)&=&
a_{0}+\lambda^{\prime\prime}\sum_{e=1}^{E}\frac{b_{e}}{c_{e}^{2}}K_{e}\left(\bar{x}_{e,t^\prime}-a_{e}\right),\\
 \mathbb{V}\left(Y_{t'}|\mathbf{X}_{t'}=\mathbf{x}_{t'}\right)
 &\propto&\beta+\frac{1}{2}\left\{\sum_{e=1}^{E}c_{e}^{-2}\sum_{k=1}^{K_{e}}(x_{e,k,t^\prime}-a_{e})^{2}-
 \lambda^{\prime\prime}\left\{\sum_{e=1}^{E}\frac{b_{e}}{c_{e}^{2}} K_{e}\left(\bar{x}_{e,t^\prime}-a_{e}\right)\right\}^{2}\right\}\,.
\end{array}
\label{eq-esperance}
\end{equation}
\end{remark}
This predictive distribution is very similar to the EMOS one, (see Eq~\eqref{eq:EMOS}), for which the predictive expectation is expressed linearly as a function of the mean of the members and the predictive variance as a function of the ensemble variance.
In Eq~\eqref{eq-esperance}, it appears that a member from ensemble $e$ has an even greater impact on the forecast as the $\frac{b_e}{c_e^2}$ ratio is large. 
Therefore, we define the \textit{contribution} of a member of the ensemble $e$ to the final forecast by: 
\begin{equation*}
contrib_e=\frac{\frac{b_e}{c_e^2}}{\sum_{e^\prime=1}^{e^\prime=E}K_{e^\prime}\frac{b_{e^\prime}}{c_{e^\prime}^2}}.
\end{equation*}


\section{An extension of the multi-ensemble exchangeable Gamma Normal model to the precipitation case}
\label{sec:Tobit-case}
In this section, we investigate an adaptation of the post-processing method based on the exchangeability hypothesis for precipitation-like variables. These variables cannot be assumed to be normally distributed:
they exhibit a mixed nature with a discrete component at zero and a positive continuous component.
In the long term, we wish to be able to jointly post-process temperature variables and rainfall type variables,
this is one of the reasons why we seek to remain within the convenient framework of the Gaussian family.
The approach proposed herein can be viewed as a Tobit regression (Tobin, 1958 ; Chib, 1992) and is similar to techniques presented in \cite{scheuerer2015statistical} and \cite{thorarinsdottir2010probabilistic}.

\subsection{Multi-ensemble Multilevel Exchangeable Tobit model}
\label{Subsec:ModelAllard}
The underlying idea of the following model is that precipitation (as observations with a zero discrete component and a continuous positive component),
would be some left censorship from a continuous (latent) variable \citep{allard2012modeling}.
Based on this idea, some work has been already undertaken to develop a post-processing method for precipitation forecasts by \citep{schultz2008continuous}.
In this work, the latent continuous variable associated with precipitation has a physical meteorological interpretation and is called 
\textit{pseudo-precipitation}. We leave aside any physical interpretation and assume that these pseudo-precipitations are Gaussian after an appropriate invertible transformation $f_\mathcal{N} $ such as the Box-Cox one \citep{box1964analysis}.
The model we propose for such normalized pseudo-precipitation is the same as that proposed for temperature variables.

Let $Y^\prime_t $ and $\bX_t^\prime $ be the precipitation to forecast and its predictors.
 $Y_t$ and $\bX_t$ become in this section latent variables: they correspond to the underlaying normal pseudo-precipitation and its normal pseudo-ensembles of predictors.
The model is as follows \new{for a given lead time and a given location: for any $e,k,t$}: 
\begin{align}
\begin{cases}
\left(X_{e,k,t}|Z_t\right) & =a_{e}+b_{e}Z_t+c_{e}\varepsilon_{e,k,t}\\
\new{\left(Y_t|Z_t\right)=\left(X_{0,1,t}|Z_t\right)} & \new{=a_{0}+Z_t+\varepsilon_{0,1,t}}\\
\left(X_{e,k,t}^{\prime}|X_{e,k,t}\right) & =\mathbb{I}_{X_{e,k,t}>\nu}f_\mathcal{N}^{-1}\left(X_{e,k,t}\right)\\
\left(\varepsilon_{e,k,t}|\omega_t^{-2}\right) & \underset{}{\overset{ind}{\sim}}\mathcal{N}(0,\omega_t^{2})\\
\left(Z_t|\omega_t^{-2}\right) & \overset{ind}{\sim} \mathcal{N}(0,\lambda\omega_t^{2})\\
\omega_t^{-2} & \overset{iid}{\sim} \Gamma\left(\alpha,\beta\right)
\end{cases}
\label{eq:tobit}
\end{align}
$\mathbb{I}$ denotes the indicator function.
 $Z_t$ and $\omega^2_t$ are the latent backbone of this multilevel exchangeable model and are assumed to be \nnew{independent across time.}
Conditionally upon ($Z_t,\omega_t^2$), the ensemble members are iid within each ensemble. 
$\alpha$, $\beta$, $\lambda$, $\nu$ and 
$\left\{a_e,b_e,c_e\right\}_{e\in\left\{0,\ldots,E\right\}}$ 
are parameters to be estimated \new{(specifically to each lead time and location)}.
Again for identifiability concerns,  $b_0=c_0=1$. 
The random variables $\left(X_{e,k}\right)_{e\in0,\ldots ,E\:k\in1,\ldots ,K_{e}}$ are now latent (yet observed when greater than $\nu$).
Figure~\ref{Fig-DAGprecip} shows the DAG (Directed Acyclic Graph) that corresponds to the precipitation model.
\begin{figure}[!h]
\centering
 \includegraphics[viewport=50bp 170bp 430bp 570bp,clip,scale=0.3]{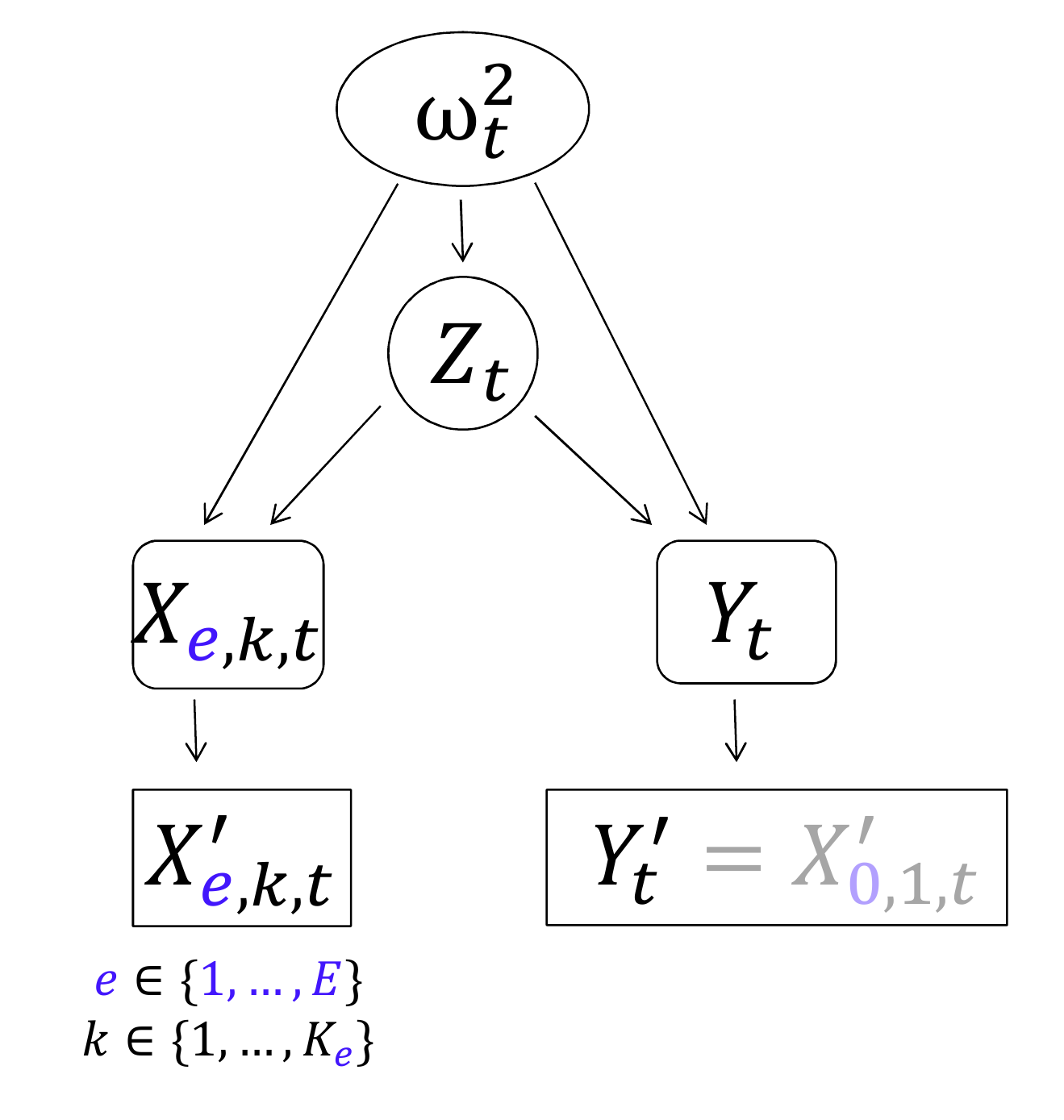}
 \caption{Direct acyclic graph of the post-processing model for precipitation. 
 $X^{\prime}_{e,k,t}$ denotes the $k^{th}$
 member of ensemble $e$ \new{at time $t$},
 $K_e$ is the size of the ensemble $e$, 
 $Y^{\prime}_t=X^{\prime}_{0,K_0=1,t}=X^{\prime}_{0,t}$ denotes the variable to forecast 
 and $\mathbf{X_t}$, $Y_t$, $Z_t$ and $\omega_t^2$ are latent variables
 (partially observed in the case of $\mathbf{X_t}$ and $Y_t$). \new{Times are assumed to be independent.}}
 \label{Fig-DAGprecip}
\end{figure}

\subsection{Inference}
As for the model of the previous section, the EM algorithm is tailored for estimating the parameters of our exchangeable Tobit model. Beforehand, we proceed with the estimation of the normalizing transformation from historical precipitation data.

\subsubsection*{Normalisation parameters}
The selection of an appropriate normalizing transformation $f_\mathcal{N}$ is an important issue. In this work, we consider the power transform: $f_\mathcal{N}(x^\prime)=x^{\prime\gamma}$, where $\gamma$ is a parameter to be estimated. We choose the same transformation parameter regardless of the precipitation ensemble considered, the one estimated from the observed precipitation values (variable to predict). Thus, in our model, $f_\mathcal{N}$ does not depend on the ensemble $e$. Consequently, we can use all historical precipitation data available for inferring parameter $\gamma$.
Obviously, for precipitations, it is essential to ensure that the inverse transformation, $f_\mathcal{N}^{-1}$, takes positive values on  $\left]\nu,+\infty\right[$. In this work, we set 
$\nu=0$ which
corresponds to a very refined sensitivity of the rain gauge, with non-zero values of observed rainfall very close to $0$. 
The goal is therefore to find a value for $\gamma$ such that 
 $Y^{\prime\gamma}$
can be assumed to follow a normal distribution and is left censored at $0$.

Assuming such a model with temporal independence of the precipitation phenomenon, the log-likelihood, can be written as follows:
\begin{align*}
 \mathcal{L}\left(\{y'_t\}_t\right)&=\sum_{t\textrm{ s. t. }y{\prime}_t>0}
 \left(\log\left[Y_t^{\prime}=y_t^{\prime}\right]\right)+\#\{t, y'_t=0\} \log([Y^{\prime}_t=0]) \\
 &=\sum_{t\textrm{ s. t. }y{\prime}_t>0}\left( \log\left[Y_t^{\prime\gamma}=y_t^{\prime\gamma}\right]
 +\left(\gamma-1\right)\log\left(y^\prime_t\right)+\log\left(\gamma\right)\right)+\#\{t, y'_t=0\} 
 \log(\left[Y_t\le 0\right])\\
 &=\sum_{t\textrm{ s. t. }y{\prime}_t>0} \left(\log\psi\left(y^{\prime\gamma}_t;\mu,\sigma^2\right)
 +\left(\gamma-1\right)\log\left(y^\prime_t\right)+\log\left(\gamma\right)\right)+\#\{t, y'_t=0\}\log (\Psi(0;\mu,\sigma^2)),
\end{align*}
where $\mu$ et $\sigma$ are also parameters to be estimated,
$\psi\left(x;\mu,\sigma^2\right)$ and $\Psi\left(x;\mu,\sigma^2\right)$ are respectively the pdf and the cumulative distribution function at $x$ of a Gaussian distribution with mean $\mu$ and variance $\sigma ^2$
and $\#$ denotes the cardinal of a set.
In practice, parameters $\mu, \sigma ^2$ and $\gamma$ are obtained by maximizing the likelihood with a numerical optimization method (the Nelder-Mead procedure implemented in \texttt{R}). The power transformation 
is then applied to the observations and the corresponding ensemble forecasts.

\subsubsection*{Other parameters: the Stochastic EM algorithm}
The E-step of the EM algorithm requires to compute, for any $t$, the conditional distribution function of 
$(\bX_t,Y_t,Z_t,\omega_t^{-2}|\bX'_t,Y_t')$,
which is not  tractable.
The distribution of $\left(Z_t,\omega_t^{-2}|\bX_t,Y_t,\bX_t^{\prime},Y_t^{\prime}\right)$  is the same as the distribution of
$\left(Z_t,\omega_t^{-2}|\bX_t,Y_t\right)$
and the 
distribution of
$\left(\bX_t,Y_t|Z_t,\omega_t^{-2},\bX_t^{\prime},\
Y_t^{\prime}\right)$ is given by: for any $e,k,t$ \new{(included $e=0$ that is $X_{0,1,t}=Y_t$)},
\begin{align}
\left[X_{e,k,t}|Z_t,\omega_t^{-2},X_{e,k,t}^{\prime}
\right] =& \begin{cases}
 & \mathbb{I}_{\left\{X_{e,k,t}=f_\mathcal{N}\left(X_{e,k,t}^{\prime}\right)\right\}}\:\textrm{if \ensuremath{X_{e,k,t}^{\prime}>0}}\\
 & \psi_{<\nu}\left(X_{e,k,t};a_{e}+b_{e}Z_t,c_{e}^{2}\omega_t^{2}\right)\:\textrm{if \ensuremath{X_{e,k,t}^{\prime}=0}}
\end{cases}
\label{eq:loiXcondX}
\end{align}
where $\psi_{<\nu}\left(x;\mu,\sigma^{2}\right)$ denotes the Gaussian pdf with mean $\mu$ and variance $\sigma^2$ truncated to the right at $\nu$. 
Therefore, we can add a simulation step (S-step) before the E- step in the inference algorithm.
This leads to a partially stochastic EM algorithm \citep{broniatowski1983reconnaissance,celeux1985sem}.
The SEM algorithm and the Gibbs algorithm used in S-step are provided hereafter.

\begin{algorithm}[H]
  \SetSideCommentLeft
  \DontPrintSemicolon
  \KwSty{Initialization:} Set $\btheta^{(0)}$ by a method of moments.\;
  \Repeat{$\left\|\btheta^{(h+1)} - \btheta^{(h)}\right\| < \varepsilon$ }{
   
   \begin{enumerate}
   \item[S-step] 
   For each time $t$ of the learning set, Simulate $(\bX_t^S,Y_t^S)$ with respect to the conditional\\ distribution 
$\left[\bX_t,Y_t|\bX_t^{\prime},Y_t^{\prime};\btheta^{(h)}\right]$ by Algorithm \ref{algo:gibbs}.
   \item[E-step] Compute needed moments of latent variables in $\bZ$ and $\bO^2$ with respect to the current \\ parameter values ($\btheta^{(h)}$) and $(\bbX^S,\bY^S)$.
  \item[M-step] Update the current parameter values:
   $$\btheta^{(h+1)}=
   \arg\max_{\btheta}\E_{[\bZ,\bO^{-2}|\bbX^S,\bY^S;\btheta^{(h)}]}\left(\log\left(\left[\bbX^S,\bY^S,\bZ,\bO^{-2};\btheta\right]\right)\right)$$
   \end{enumerate}
   
  }
 \caption{SEM algorithm for estimating parameters in Model \eqref{eq:tobit}}
 \label{algo:SEM:preci}
\end{algorithm}

\begin{algorithm}[H]
  \SetSideCommentLeft
  \DontPrintSemicolon
  For a current value of the parameters $\btheta^*=
  \left(\alpha^{*},\beta^{*},\lambda^{*},\ba^{*},\bb^{*},\bc^{*}\right)$.\;
  \KwSty{Initialization:} Simulate $\omega^{-2,(1)}$ from $\Gamma(\alpha^{*},\beta^{*})$
then $Z^{(1)}$ from $\mathcal{N}(0,\lambda^{*}\omega^{2,(1)})$.\;
  \For{$i$ in 1 to $N_{iter}$}{
   
   \begin{enumerate}
   \item For each $(e,k)$ such that $X'_{e,k}=0$, simulate $X^{(i)}_{e,k}$ from 
   $\left[X_{e,k}|Z^{(i)},\omega^{-2,(i)},X_{e,k}^{\prime};\btheta^*\right]$.
   \item Simulate $(Z^{(i+1)},\omega^{-2,(i+1)})$ 
   from $\left[Z,\omega^{-2}|\bX^{(i)},Y^{(i)};\btheta^*\right]$.
   \end{enumerate}\;
 \KwSty{Return:} $(\bX^S,Y^S)=(\bX^{(N_{iter})},Y^{(N_{iter})})$.
  }
 \caption{Gibbs algorithm for simulating latent variables $(\bX,Y)$ conditionally to $(\bX',Y')$}
 \label{algo:gibbs}
\end{algorithm}

\bigskip
Note that in Algorithm $\ref{algo:gibbs}$, for each $t$, the values of $X_{e,k,t}$ corresponding to $X'_{e,k,t}\not=0$ are set as $X_{e,k,t}=f_\mathcal{N}(X'_{e,k,t})$. The number of iterations $N_{iter}$ has to be chosen large enough to ensure that the Gibbs algorithm has converged. 

\subsection{Forecasting}
In order to simulate $Y'_{t'}$ from the  conditional distribution of $\left(Y'_{t'}|\mathbf{X}'_{t'}\right)$, a  sample from the distribution of $\left(Z_{t'},\omega^{-2}_{t'}|\mathbf{X}_{t'}'\right)$ is first drawn. Then, for each couple $\left(Z_{t'},\omega^{-2}_{t'}\right)^{(i)}$ of this sample, 
we simulate $Y_{t'}^{(i)}$ from the distribution of $(Y_{t'}|Z_{t'}^{(i)},\omega^{-2,(i)}_{t'})$ and apply $Y^{\prime(i)}_{t'}=f_\mathcal{N}^{-1}\left(Y_{t'}^{(i)}\right)\mathbb{I}_{Y_{t'}^{(i)}>\nu}$. The sample of the distribution of $\left(Z_{t'},\omega^{-2}_{t'}|\mathbf{X}_{t'}'\right)$ is simulated by using Algorithm \ref{algo:gibbs}, where $Y'_{t'}$ is not considered in the conditional distributions in Steps 1 and 2. In this sample, we removed the first iterations of the Gibbs Algorithms which correspond to a burn-in period.

\section{Simulation study }
\label{sec:Numerical-experiment}
In this section we check that, in a realistic framework, i. e. for parameters close to those that would be learned from real data sets, we are able to correctly estimate them with the SEM algorithm described above.
Since the $\gamma$ parameter of the normalizing transformation has been learned separately,  we work directly in the normalized space.
For this algorithmic experiment, we simulate artificial data according to the Tobit model described in Section \ref{Subsec:ModelAllard} with the following parameters:
\begin{align*}
  K & = \left(1,10,35,1\right)\\
  a & = \left(a_0,a_1,a_2,a_3\right) = \left(0,1,0.7,-0.1\right)\\
  b & = \left(b_0,b_1,b_2,b_3\right) = \left(1,1.1,1,0.9\right)\\
  c & = \left(c_0,c_1,c_2,c_3\right) = \left(1,0.8,0.7,1.1\right)\\
  \alpha & = 2.5, \quad \beta=3, \quad \lambda=0.5,
\end{align*}
where the first element of each vector is relative to the observations (referred to with index $0$).
We generate $100$ artificial datasets of $200$ elements each. 
The SEM algorithm described above is then run on the $100$ first elements of each of the $100$ datasets for inference with the initial value of the parameters chosen by the method of moments.
Note that parameters $b_0$ and $c_0$ are not estimated: they are set to $1$.
The SEM algorithm is launched for $1000$ steps and, within each iteration, the Gibbs algorithm carries out $4$ iterations, which appears to be sufficient in this case given the good estimation results. 

The distributions of parameter estimates obtained from the $100$ simulated datasets are illustrated in the form of boxplots on Figure~\ref{Fig-EstimParamCens}.
Whatever the parameter considered, the resulting estimate does not show any significant bias.  
\begin{figure}[h!]
\centering
 \includegraphics[scale=0.7]{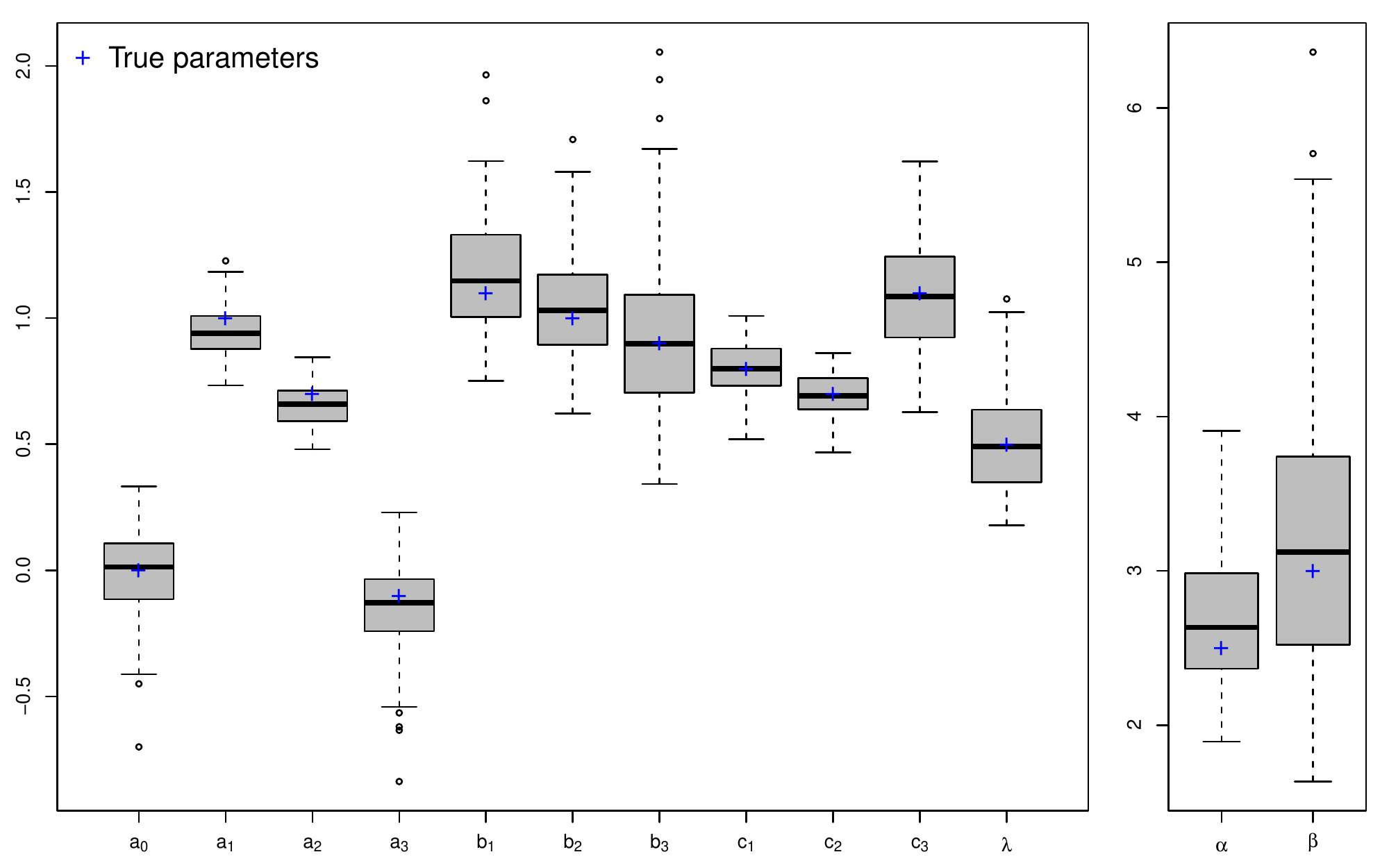}
 \caption{
 Parameters obtained after inference from $100$ simulations of $100$ repetitions according to the precipitation model. The parameters used for the simulation are represented by the blue crosses.}
 \label{Fig-EstimParamCens}
\end{figure}

\nnew{We then assess the performance of the corresponding forecasts. 
We run our forecasting method on the $100$ last elements of 
each of the $100$ simulated data sets, both with the parameters used for simulations ("oracle" forecasts) and with the estimated parameters ("prediction" forecasts). We compare those forecasts to those obtained by considering the raw ensembles as samples from probabilistic forecasts.
Guided by \citep{gneiting2007probabilistic}, we apply verification tools such as the rank histogram and the continuous ranked probability score (CRPS) to evaluate the performance of those forecasts. The rank histogram assesses their  reliability, while the CRPS evaluates both their reliability and their sharpness.
A rank histogram is computed from the forecasts on each of the $100$ datasets. All these ranks histograms are summarized by the p-values obtained by comparing them to flat histograms through multinomial goodness-of-fit Chi-squared tests. The results of those
tests show a good reliability of our forecasts and an even better reliability of the oracle forecasts, as illustrated in Figure~\ref{fig:simuforecast} (left-hand-side). 
The reliability of Ensemble~3 (which is a single member ensemble) is omitted. 
The CRPS (Figure~\ref{fig:simuforecast}, right-hand-side) shows a better performance of our forecasts compared with the ones obtained from the raw ensembles. Note that in the case of the third ensemble (deterministic forecast), the CRPS reduces to the Mean Absolute Error (MAE) \citep{hersbach2000decomposition}.
We also check the reliability of the simulated distributions of the latent variables $Z$ and $\omega^2$ by computing their coverage rates. The median coverage rates of the 88\% credible intervals are respectively of $0.70$ and $0.87$ for prediction and oracle methods in the case of $Z$, and of $0.78$ and $0.86$ in the case of $\omega^2$.
}

\begin{figure}[h!]
    \centering
\begin{minipage}{.47\textwidth}
 \includegraphics[scale=0.56]{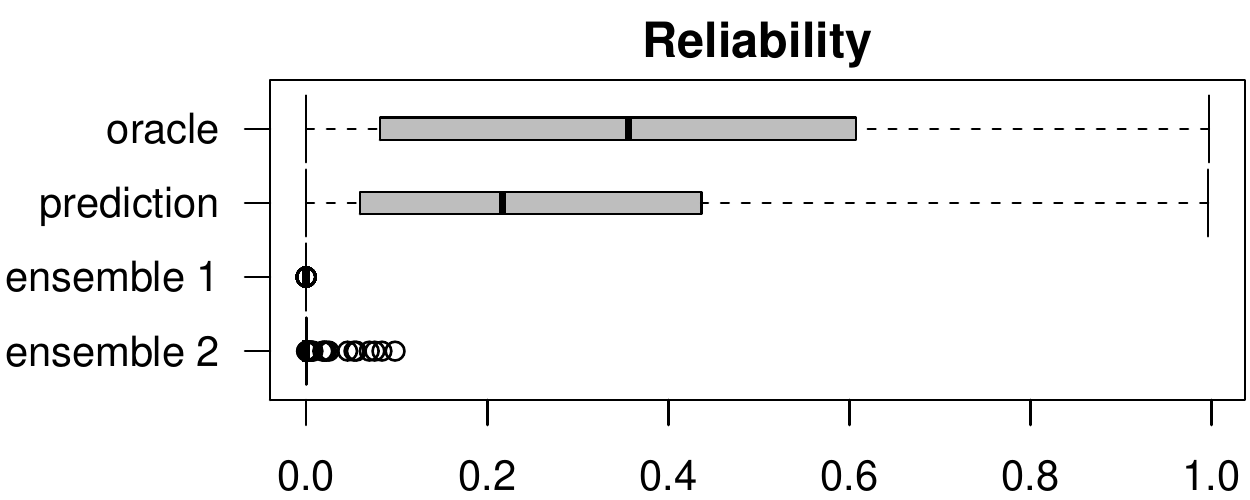}
\end{minipage}
\begin{minipage}{.47\textwidth}
\includegraphics[scale=0.56]{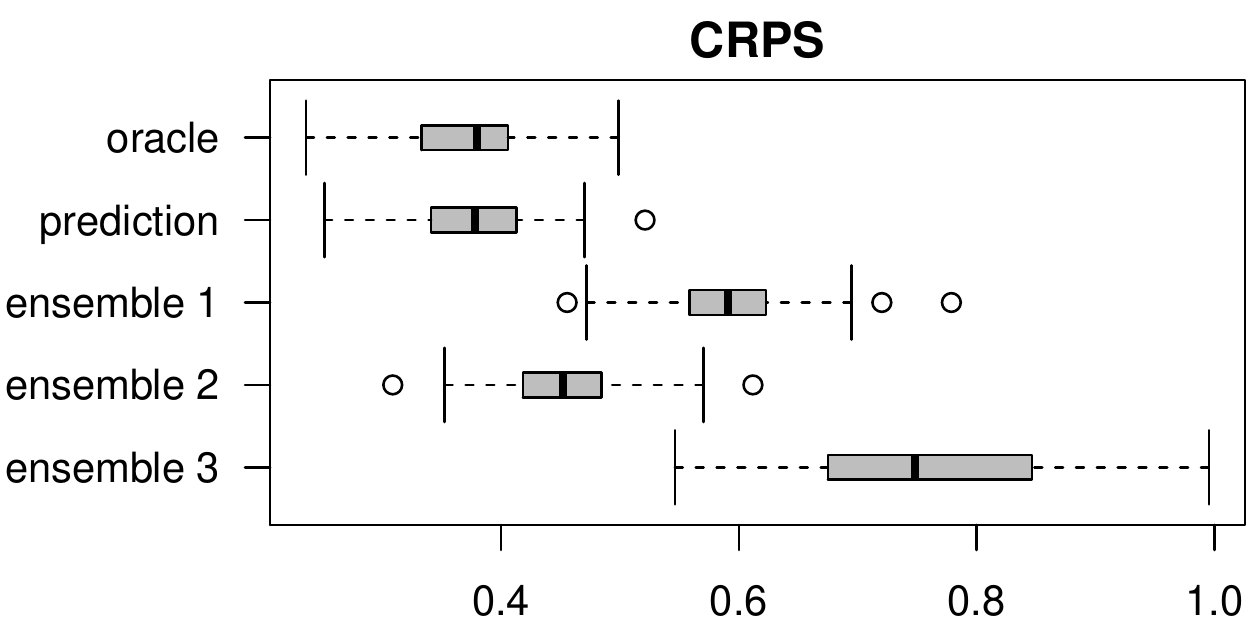}
\end{minipage}
    \caption{\nnew{Reliability (left-hand-side) and CRPS (right-hand-side) of forecasts issued by post-processing approaches computed over $100$ datasets simulated according to the precipitation model. Reliability is here summarized by the p-values of the multinomial goodness-of-fit Chi-squared tests which compare the rank histograms to a flat histogram. \textit{oracle} and \textit{prediction} refer to our post-processing method with respectively true and estimated parameters. They are compared with the three raw ensembles considered as probabilistic forecasts (Reliability of Ensemble $3$ is omitted since it is a single member ensemble).
    }}
    \label{fig:simuforecast}
\end{figure}

\section{Application to meteorological data from Québec}
\label{sec:Applications}

The case study developed in this section to illustrate the use of our post-processing approaches concerns ensemble temperature and precipitation forecasts available daily over Hydro-Québec Manicouagan watershed, a major hydropower system. As illustrated on Fig~\ref{Fig-ComplexeManic}, the Manicouagan watershed is subdivided into five subcatchments for which meteorological forecasts are used every day to produce streamflow predictions:  upstream to downstream, Manic-5, Petit Lac Manicouagan, Toulnoustouc, Manic-3 and Manic-2. The Manicouagan watershed is located in northeast of the province of Québec, Canada. This water resources system consists of two hydropower plants with reservoirs in parallel (Manic-5 and Toulnustouc) and three downstream run-of-river hydropower plants (Manic-3, Manic-2 and Manic-1). 
\begin{figure}[h!]
\centering
 \includegraphics[scale=0.4]{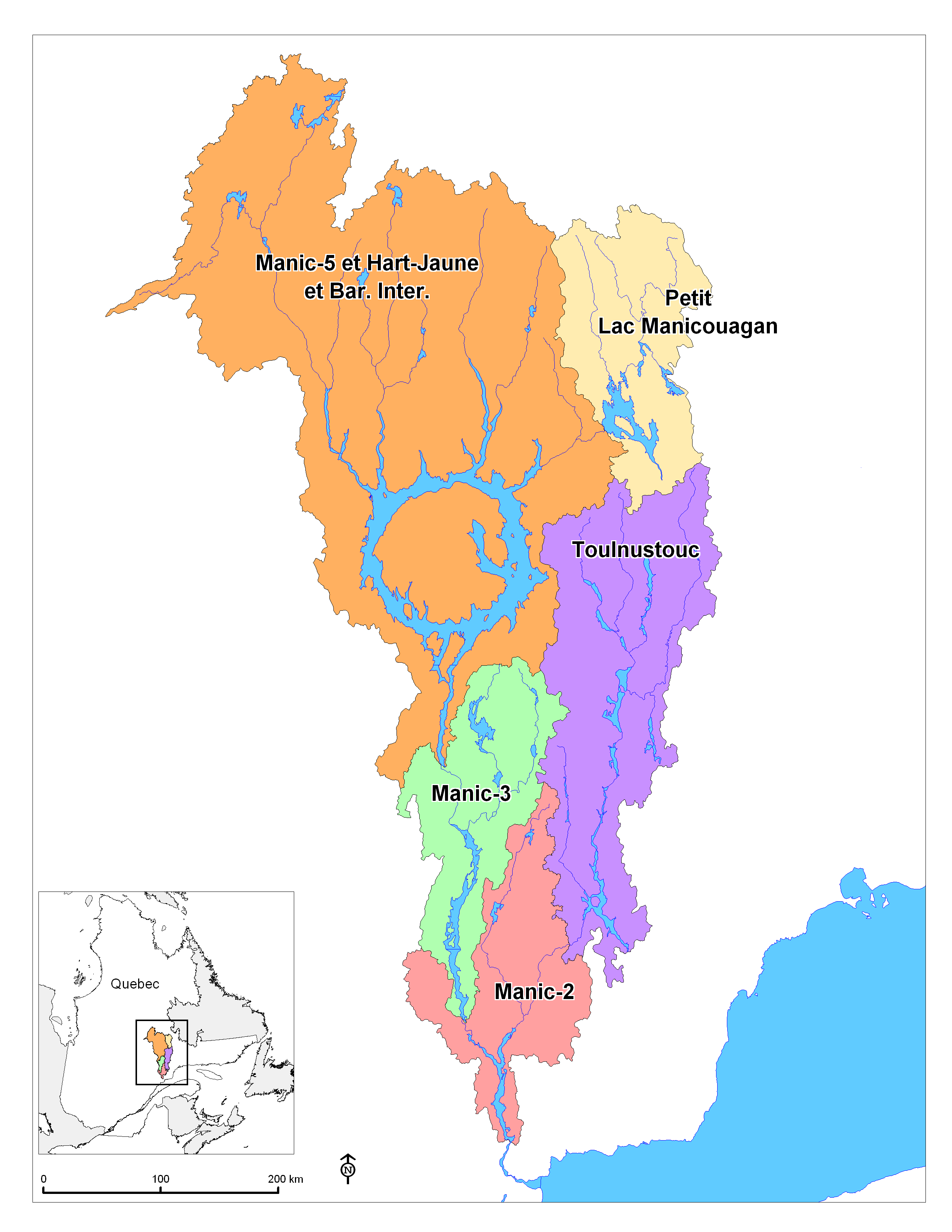}
 \caption{Manicouagan watersheds in Québec, Canada.}
 \label{Fig-ComplexeManic}
\end{figure}
The total installed capacity is 6,202 MW, which is about 17 pourcent of Hydro-Québec's total capacity.  In operating the system, generation planners face a variety of decisional problems. Two of those, common to every installation, are safety and the respect of environmental laws and regulations.  For the two upstream watersheds, with large reservoirs, the other main concerns are those of long term energy planning and optimization, and efficient releases for the operation of the run-of-river plants, given the inflows on those sub-basins. For the three run-of-river plants, the issue is an efficient scheduling, given the inflows on the watersheds and the upstream releases. It is quite clear that a good prediction of the future state of inflows, which highly depends upon weather forecasts, plays a major role on the decisions that will be made, and the efficiency of the operations. The need for reliable ensemble weather forecasts is thus self-evident.

To assess the performance of the post-processing mixed effect models, Hydro-Qu\'ebec provided us with records of daily meteorological forecasts and corresponding observations for the five watersheds for years 2013 and 2014. The meteorological ensemble forecasts were extracted from the THORPEX Interactive Grand Global Ensemble (TIGGE) database (\citep{park2008tigge}). Three daily ensembles for forecasting lead times ranging from $1$ to $9$ days produced by meteorological global forecast centres were considered:

\begin{itemize}
\item the CMC-EPS (Ensemble Prediction System), from the Canadian Meteorological Center (CMC), with $20$ ensemble members,
\item the NCEP-GEFS (Global Ensemble Forecasting System) from the National Centers for Environmental Prediction (NCEP), with $20$ ensemble members,
\item the ECMWF-EPS from the European Center for Medium-Range Weather Forecasts \\ \mbox{(ECMWF)}, with $50$ ensemble members.
\end{itemize}
Meteorological variables of interest for Hydro-Québec are
daily minimal and maximal temperatures, and precipitations. Since the rainfall-runoff model in use is a lumped and conceptual hydrological model \citep{guay2018hsami1}, the raw ensemble forecasts available at grid points belonging to the basins under study, or located adjacent to it, have been averaged to get global watershed values. These are to be compared to the 
corresponding observed values computed by Hydro-Québec.

In this section, we focus on maximal temperatures 
to test our Gamma Normal model, 
and on precipitations, to test our Tobit model. Both models assume that ensemble members are exchangeable 
within each ensemble.
This assumption seems appropriate for the ECMWF-EPS and the 
NCEP-GEFS, given the way their ensemble members are produced, 
but it is not for the CMC-EPS.
We thus look for sub-ensembles of members of 
the CMC-EPS within which exchangeability can be assumed. 
A rank test 
shows that, in the case of precipitations,  
even ensemble members and odd ensemble members 
constitute two appropriate subgroups \new{(see section \ref{sec:Exchangeability})}, \new{we thus treat them as two separate ensembles}. In the case of maximal temperatures, however, we could not find 
any such sub-group. In the absence of a better solution, 
we consider the whole ensemble.

Figure~\ref{Fig-ExManic2-raw} shows an example of a forecasting situation that has been treated in this case study, a forecast produced on the 30th of April 2014. Raw CMC-EPS, NCEP-GEFS and ECMWF-EPS precipitations and maximal temperature ensemble 
 forecasts for Manic 2 watershed are presented for lead times from 1 up to 9 days. 
The target values, to be predicted, 
are indicated by the dotted black lines.
On the precipitation example (bottom panel), we decompose 
the CMC-EPS into its two exchangeable sub-ensembles, 
odd ensemble members (CMC-EPS-1) and 
even ensemble members (CMC-EPS-2). 

\begin{figure}[!h]
\centering
 \includegraphics[scale=0.5]{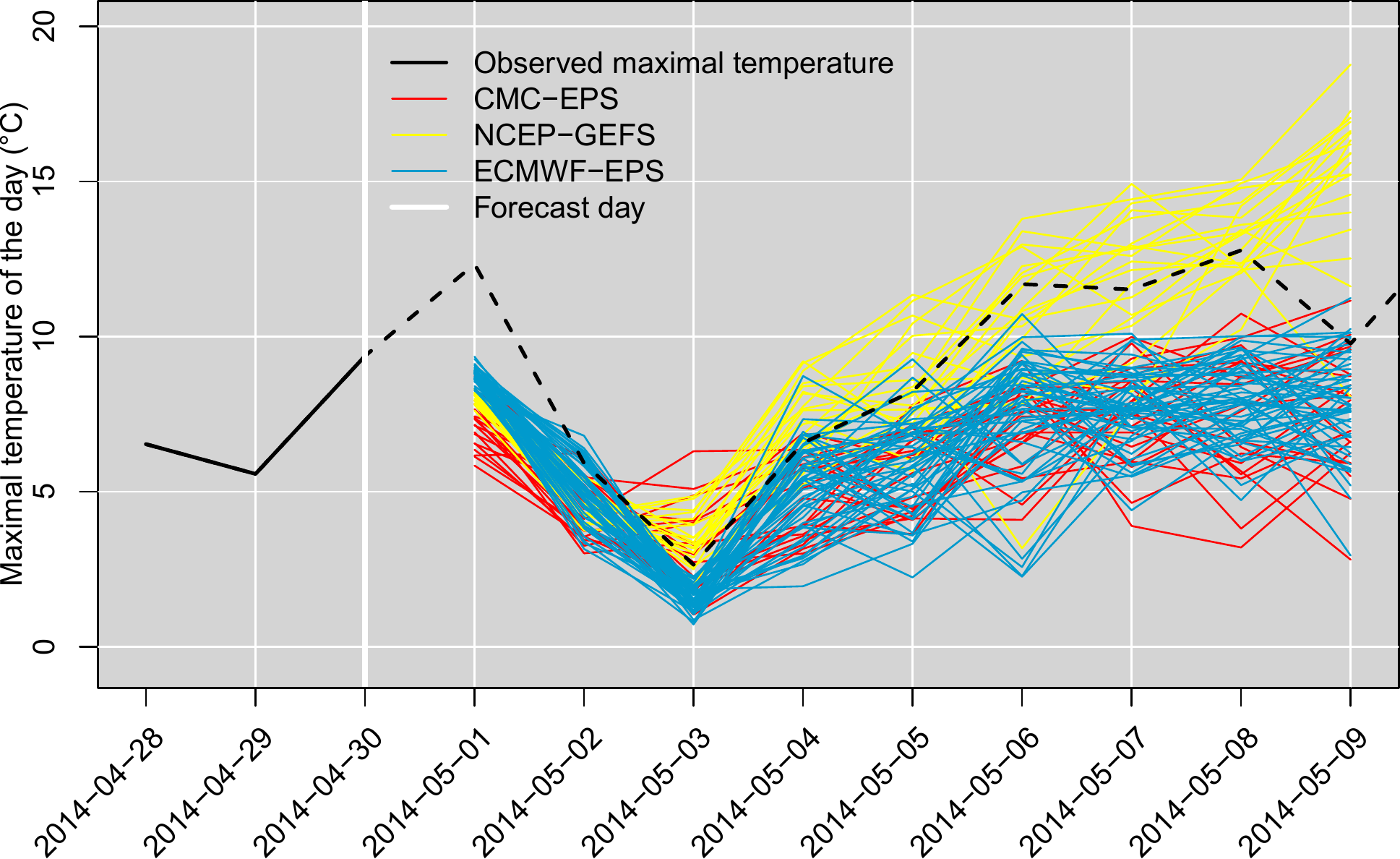}\\
  \includegraphics[scale=0.5]{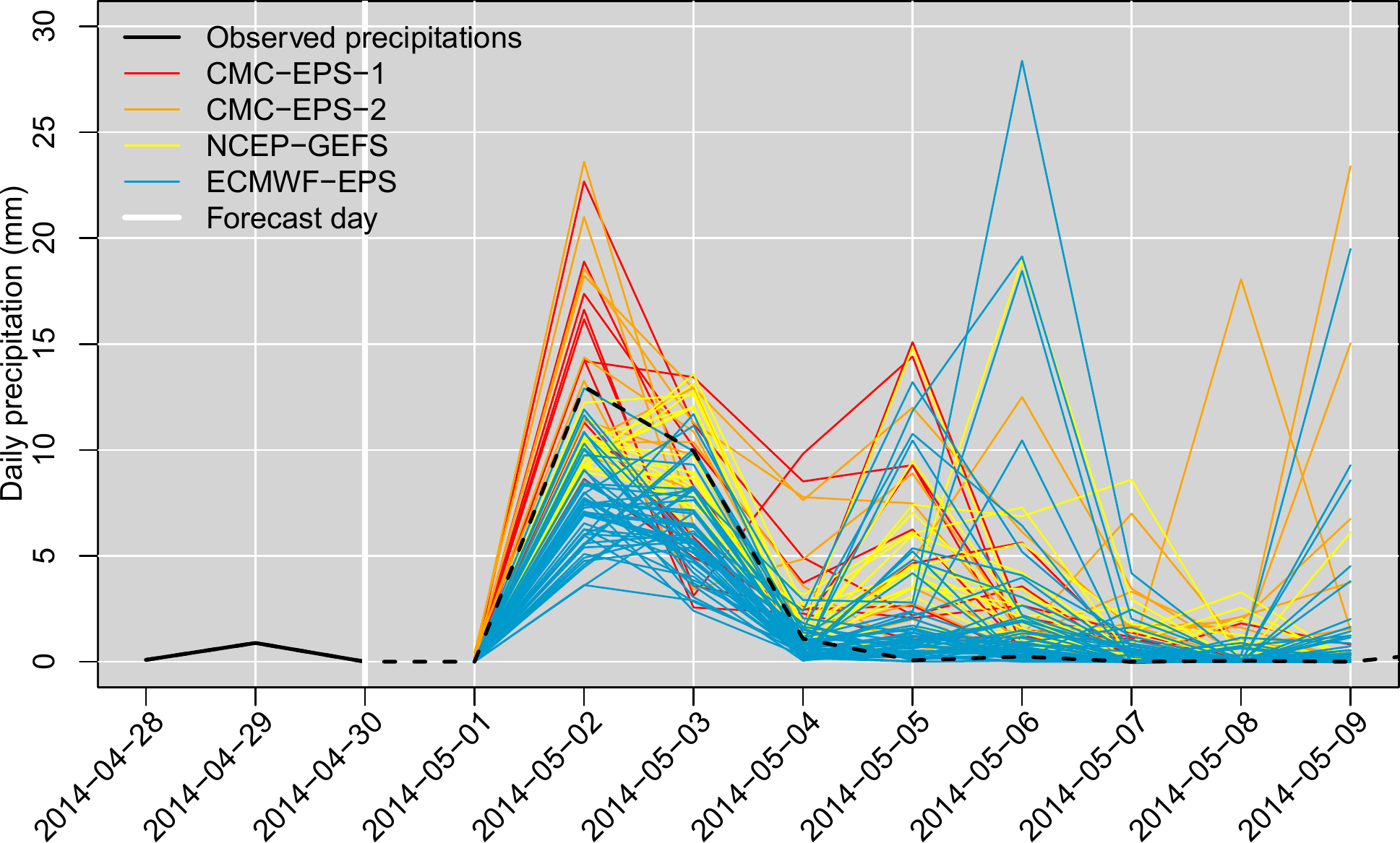}\\
 \caption{Precipitations and maximal temperature raw ensemble forecasts 
on the Manic~2 watershed produced the 30 of April 2014 for the nine 
upcoming days (CMC-EPS, odd members: 1 and even members: 2, NCEP-GEFS and ECMWF-EPS).
The target values, to be predicted, are indicated by the dotted black line.}
\label{Fig-ExManic2-raw}
\end{figure}

In the following applications, we are seeking to produce reliable predictive distributions for each lead time and each meteorological variables. First, the Gamma Normal post-processing model is applied to daily maximal temperatures ensemble forecasts, and then
the Tobit model is applied to precipitations ensemble forecasts.
 As in Section \ref{sec:Numerical-experiment}, we apply verification tools such as the rank histogram and the CRPS to evaluate the performance of the post-processing approaches.
 The CRPS have been calculated from daily predictive distributions produced for 2014, with parameters estimated using observations and raw predictions available for 2013. In both cases, temperatures and precipitations, we first focus on Manic 2 watershed and then extend our results to the four other basins.

\subsection{Application to maximal daily temperatures forecasts}

\subsubsection*{Illustration on the Manic~2 watershed}
\label{Sec-Manic2Tmax}
Figure~\ref{Fig-EstimParamManic2} shows the estimated parameters as a function of the forecast lead times. The graph Fig~\ref{Fig-EstimParamManic2}$\left(1\right)$ illustrates the additive bias of each three ensemble forecasts, given by the difference between parameters $a_1,a_2$ et $a_3$, related to the $3$ ensembles, and  parameter $a_0$, corresponding to observations.
It indicates that the three ensemble prediction systems would have produced negatively biased forecasts during 2013 for the Manic~2 watershed, regardless of the forecasting lead time.
The graph Fig~\ref{Fig-EstimParamManic2}$\left(2\right)$ gives inference results for parameters $b$ that can be interpreted as the forecast multiplicative bias. Its value is less than $1$ in the case of CMC-EPS et ECMWF-EPS ensembles, which further amplifies the diagnosis stemming from the first graph. On the other hand, this figure shows a positive multiplicative bias of the NCEP-GEFS ensemble for lead time higher than six days. This could potentially compensate for the negative additive bias previously observed for 2013.

\begin{figure}[!h]
\centering
 \includegraphics[width=0.45\textwidth]{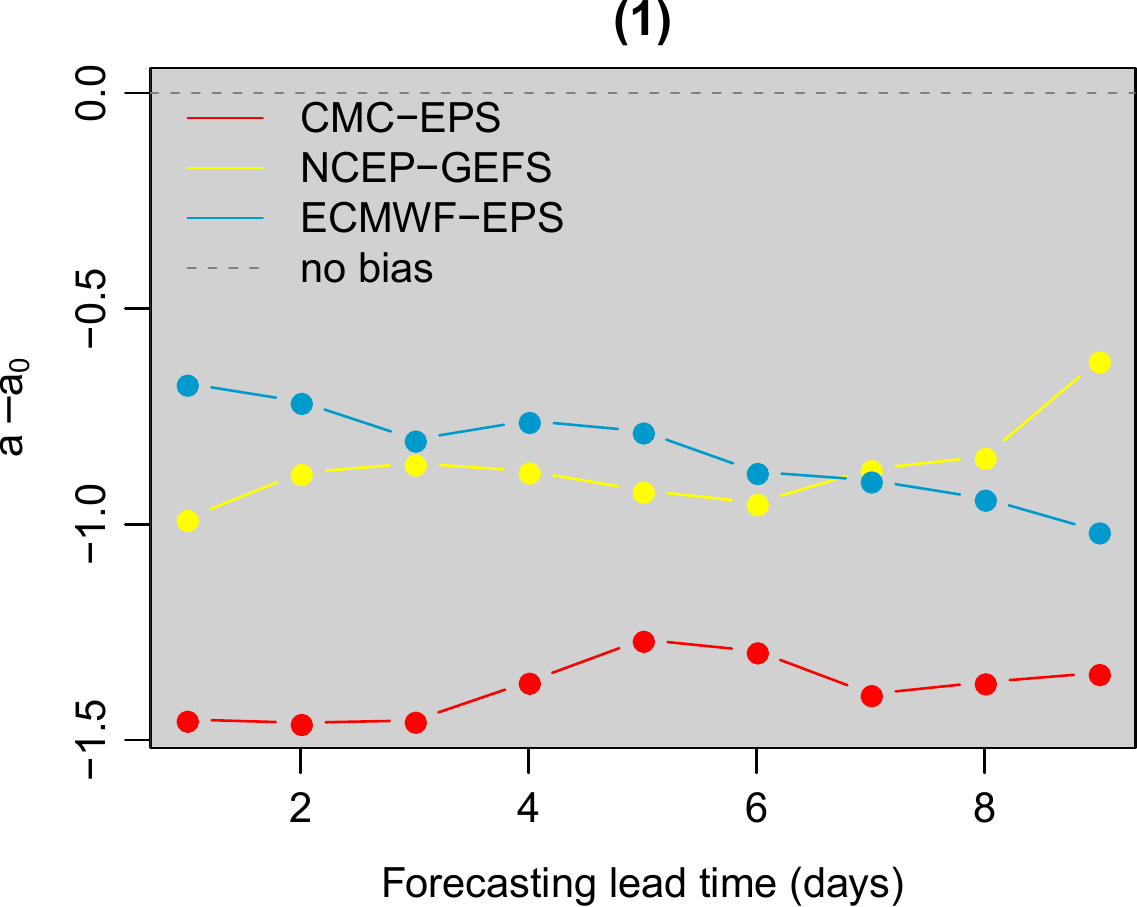}
 \includegraphics[width=0.45\textwidth]{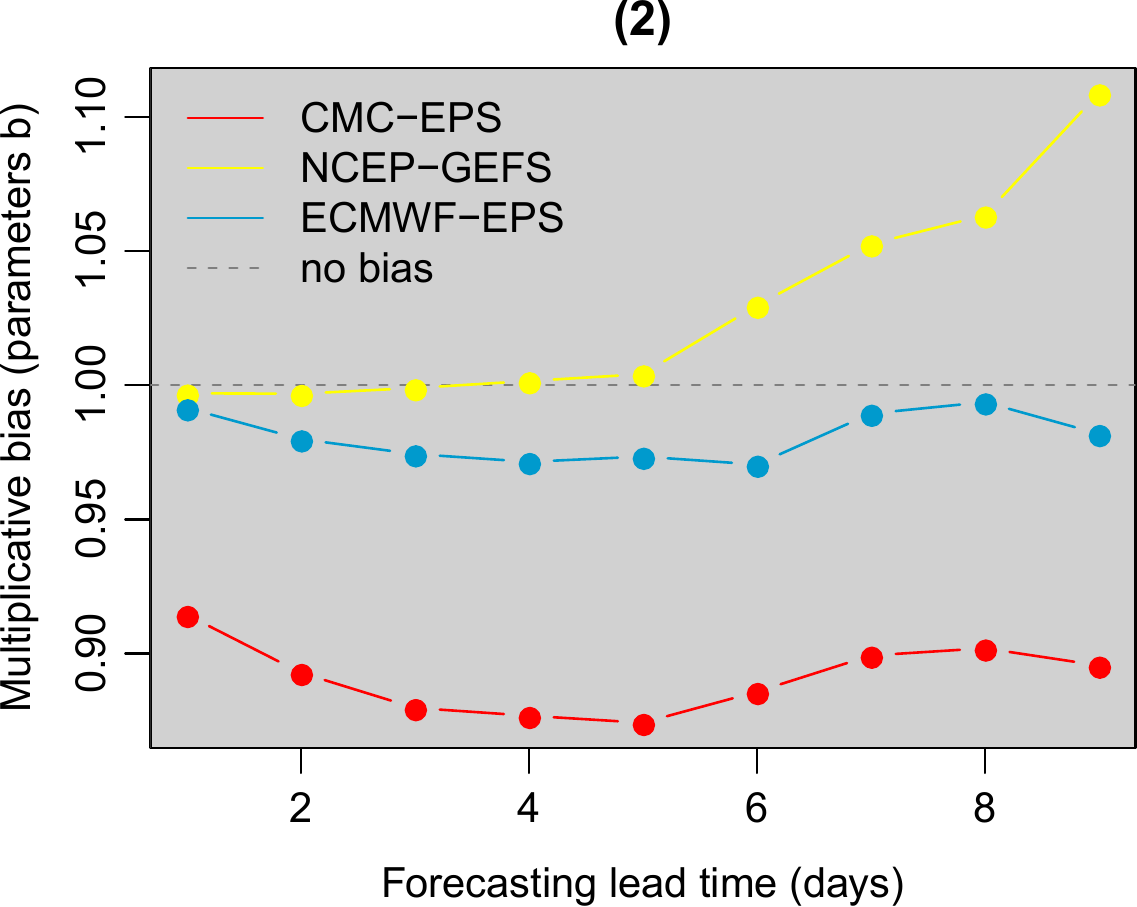}\\
 \includegraphics[width=0.45\textwidth]{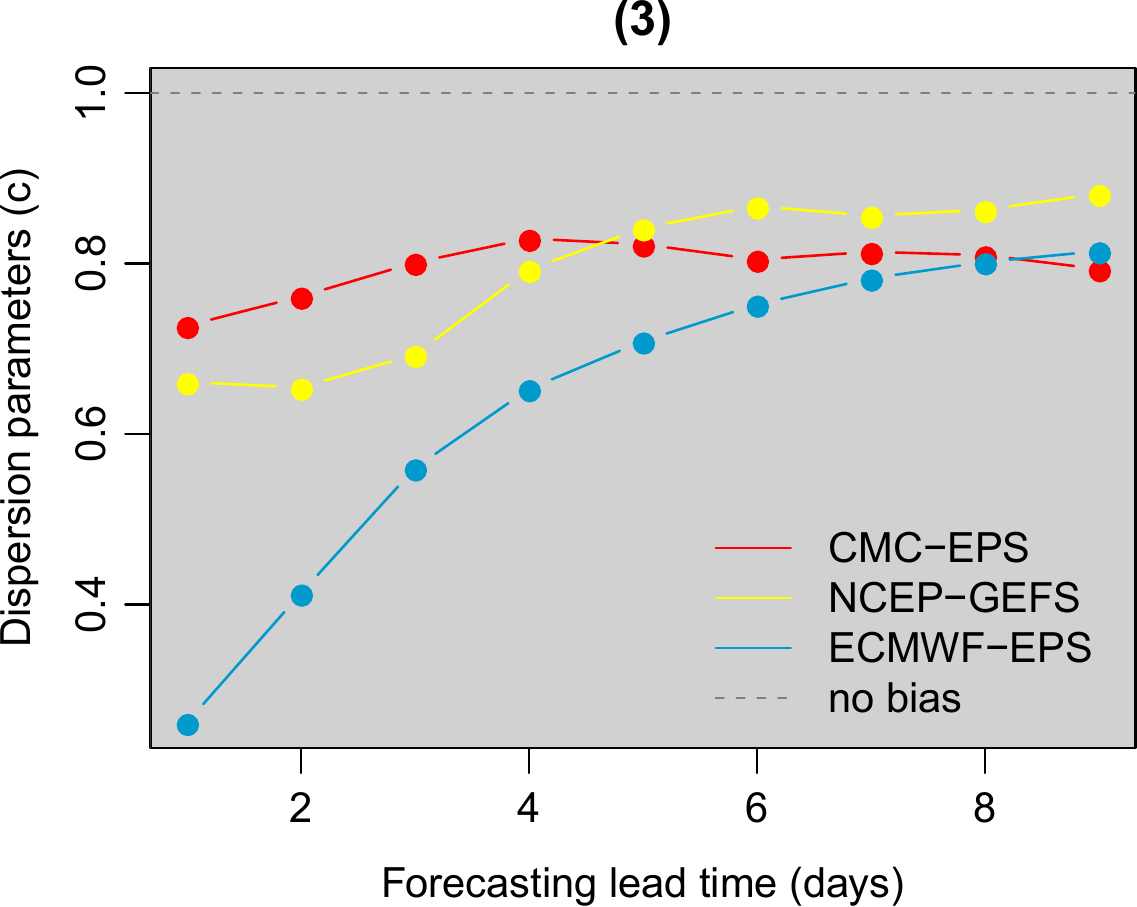}
  \includegraphics[width=0.45\textwidth]{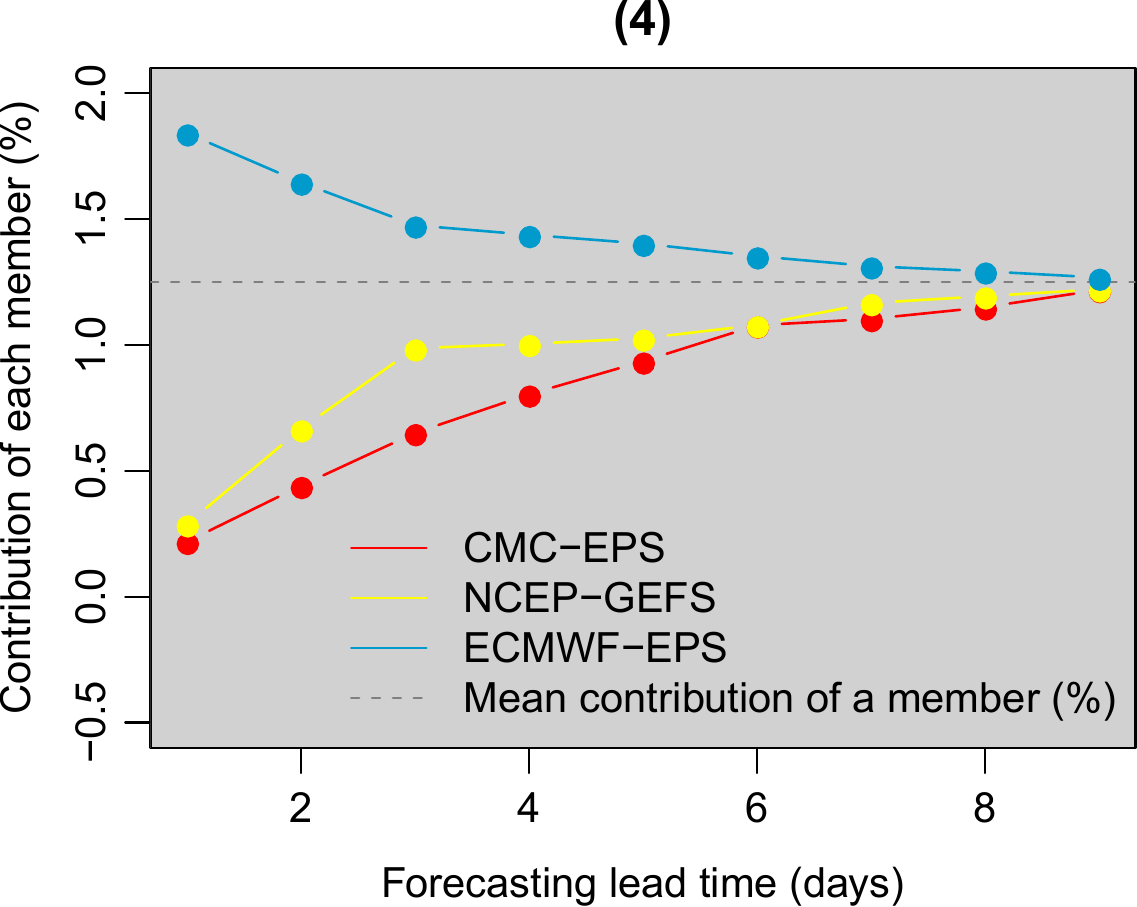}\\
 \includegraphics[width=0.45\textwidth]{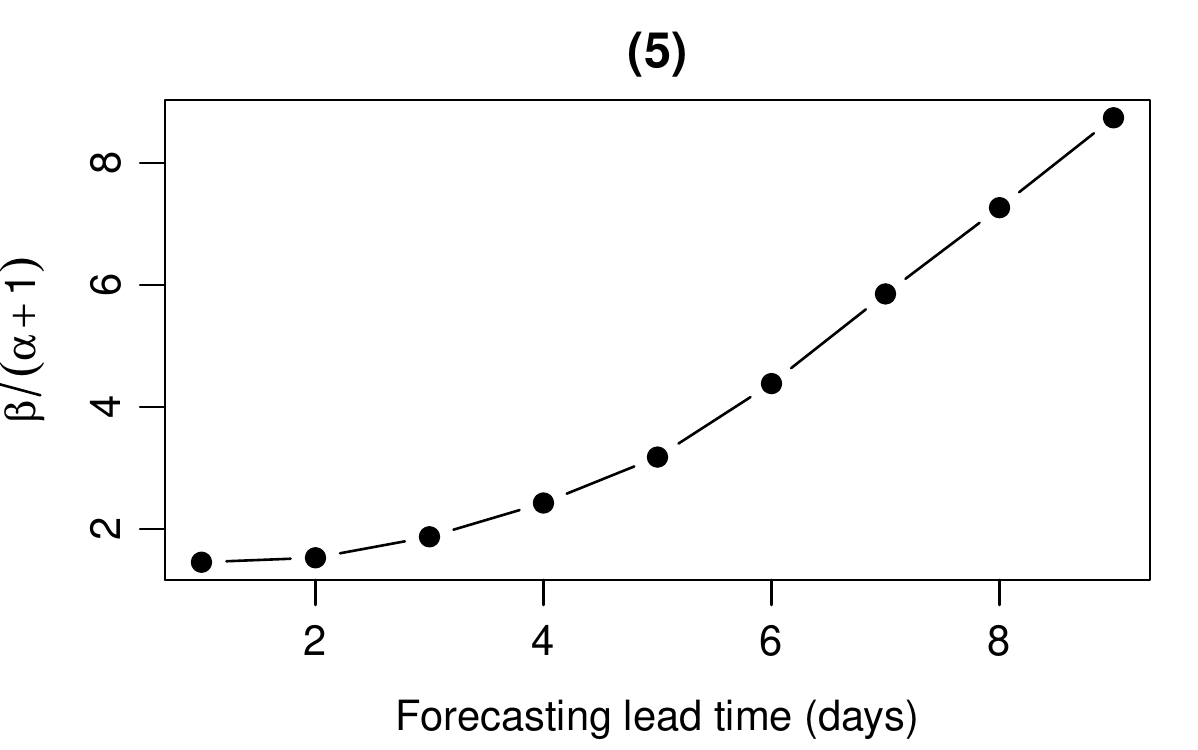}
 \includegraphics[width=0.45\textwidth]{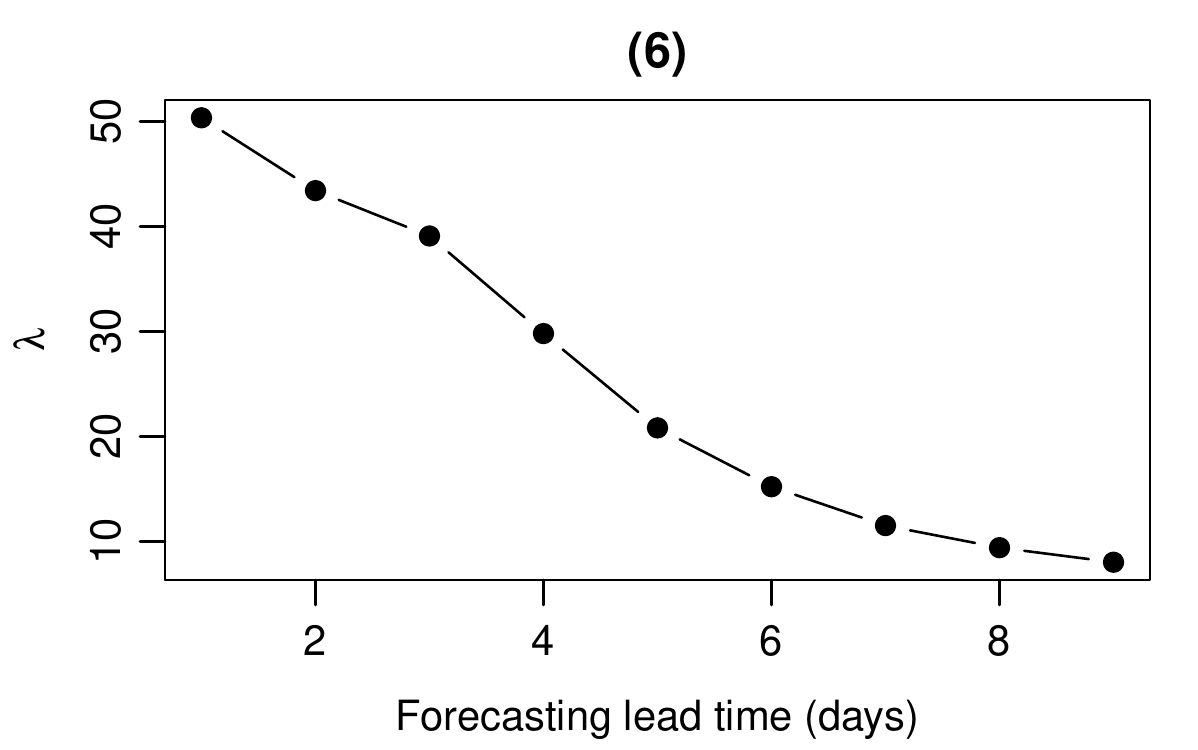}
 \caption{ 
 Parameters obtained with the EM algorithm are represented as a function of the forecasting horizon for the Manic~2 watershed with $E=3$ forecasting sources (CMC-EPS, NCEP-GEFS and ECMWF-EPS). Year 2013 has been used as the learning period.}
 \label{Fig-EstimParamManic2}
\end{figure}

Fig~\ref{Fig-EstimParamManic2}$\left(3\right)$ shows the inference results for parameters $c$ which set the inter-member dispersion of the ensemble relative to the observation one. Note that all values are lower than $1$ and increase with the forecast lead time. Therefore, the ensemble produced in 2013 would have been underdispersed, especially for short-term forecasting.
As well as the $c$ parameters, the ratio $\frac{\beta}{\alpha-1}$, illustrated on Fig~\ref{Fig-EstimParamManic2}$\left(5\right)$, increases with the forecast lead time. This ratio corresponds to the expected value of $\omega_t^2$ and therefore settles the variability of the quantity to be forecasted $Y_t$ around the latent variable $Z_t$. Thus, as expected, the uncertainty blurring $Y_t$ increases with the forecast lead time since forecasts become less informative for longer term prediction.
The graph of Fig~\ref{Fig-EstimParamManic2}$\left(4\right)$ shows the contribution to the final forecast, $contrib_e$, of each member of the ensemble prediction system $e$. It is expressed in percentage of the total contribution to the overall forecast. We observe that for 2013 at Manic 2 watershed the contribution from the members of the ECMWF-PES is much larger than that from the members of the other ensembles. However, forecasts tend to be similar in terms of contributions when the lead time increases.
Recall that the ECMWF ensemble includes $50$ members, while the other ones each include $20$ members. Therefore most of the information comes from the ECMWF ensemble. This observation is also confirmed by the first bar of Figure~\ref{Fig-Poids} where are shown the relative contribution of each forecast source $e$, given by $ K_e\times contrib_e$, averaged over the $9$ forecast lead times.
\\

We now evaluate our post-processing approach for daily maximal temperature ensemble forecasts produced in 2014 at the Manic 2 watershed.
Figure~\ref{Fig-ExManic2} again shows the forecasting situation of Figure~\ref{Fig-ExManic2-raw}, the maximal temperatures forecasts that have been produced on April 30th, 2014.

\begin{figure}[!h]
\centering
 \includegraphics[scale=0.5]{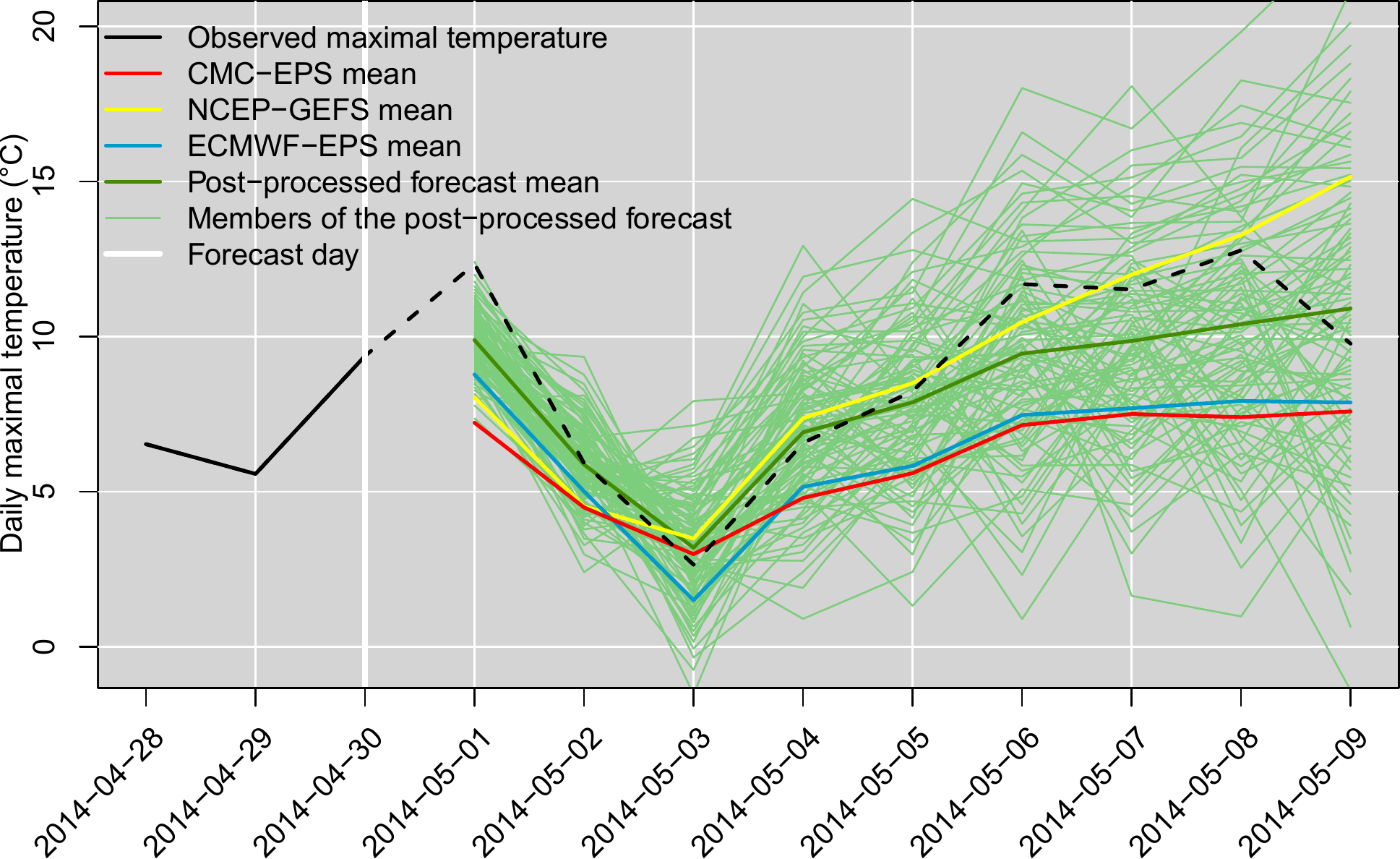}
 \caption{
 Example of a forecast issued for the daily maximum temperatures of the Manic~2 catchment area with the proposed post-processing method taking NCEP-GEFS, CMC-EPS and ECMWF-EPS as inputs. Predictive scenarios derived from meteorological forecasts by forecast time horizon using the ECC-Q are presented.
 The maximum daily temperature to be forecast (observed afterward) is indicated by the black dotted line.}
 \label{Fig-ExManic2}
\end{figure}
The graph shows the averages of the three raw ensembles considered (blue for ECMWF-EPS, yellow for NCEP-GEFS and red for CMC-EPS). The average of the forecasts resulting from post-processing involving these three forecast sources is also presented in green. For this specific forecast situation, the resulting prediction is clearly a compromise between the ECMWF-EPS and CMC-EPS members on the one hand, and NCEP-GEFS on the other.
The 90 scenarios of the forecast illustrated by green fine lines are obtained from the post-processed forecasts initially produced independently, lead time by lead time, these outputs being re-ordered using Ensemble Copula Coupling (ECC).  
\new{This approach consists in  
copying the rank structure observed in a raw ensemble to produce scenarios from independent points. 
For instance, if the scenario predicting the highest temperature at time 1 also predicted the lowest temperature at time 2 in the raw ensemble, these two extremes will also be linked in one of the final predictive scenarios.
In the ECC-Q version of the ECC, used here, the starting points are quantiles of the marginal predictive distributions \citep{schefzik2013uncertainty}.}

Based on rank histograms, the 2014 daily post-processed forecasts obtained for the Manic~2 catchment can be considered reliable, regardless of the forecast lead time (figure not shown). The CRPS values presented in Fig~\ref{Fig-CRPSManic2} support these results. This figure shows the following comparisons~: 

\begin{itemize}
\item the forecasts obtained by considering the raw CMC-EPS ensemble, \texttt{CMC-EPS},
\item the forecasts obtained by post-processing the CMC-EPS with our method after learning the model parameters on year 2013 (in this case, $E=$1), \texttt{Post-processed CMC-EPS},
\item the forecasts obtained by post-processing the CMC-EPS with the standard EMOS method, described in \cite{gneiting2005calibrated}, \texttt{Post-processed CMC-EPS with EMOS}
\item the forecasts obtained by considering the members of the CMC-EPS, NCEP-GEFS and ECMWF-EPS ensembles combined in a large ensemble,
\texttt{Grand Ensemble},
 \item the forecasts obtained by post-processing the CMC-EPS, NCEP-GEFS and ECMWF-EPS ensembles, considered as 3 distinct forecast sources ($E=3$) according to the post-processing method proposed herein whose corresponding parameters are illustrated in Fig~\ref{Fig-EstimParamManic2}, \texttt{Post-processed Grand Ensemble}.
\end{itemize}

It is seen that the raw maximal temperature ensembles of 2014 have higher CRPS values for all lead times compared to the corresponding post-processed ones. The raw CMC-EPS forecasts, currently used by Hydro-Québec for inflow forecasting, are significantly improved (according to CRPS) by the proposed post-processing method.  
One can also observe that its performance is of the same order of magnitude as that of the standard EMOS method, which was expected since these methods give similar forecasts. Furthermore, multi-ensemble post-processing makes it possible to improve the forecasts of the large raw ensemble, in particular for the shorter lead times. Finally, Figure~\ref{Fig-CRPSManic2} shows, for the Manic-2 catchment, that the combination of several sources of forecasts with statistical post-processing would be the option to be favored. This configuration indeed obtains the smallest CRPS values.

\begin{figure}[!h]
\centering
\includegraphics[scale=0.5,page=1]{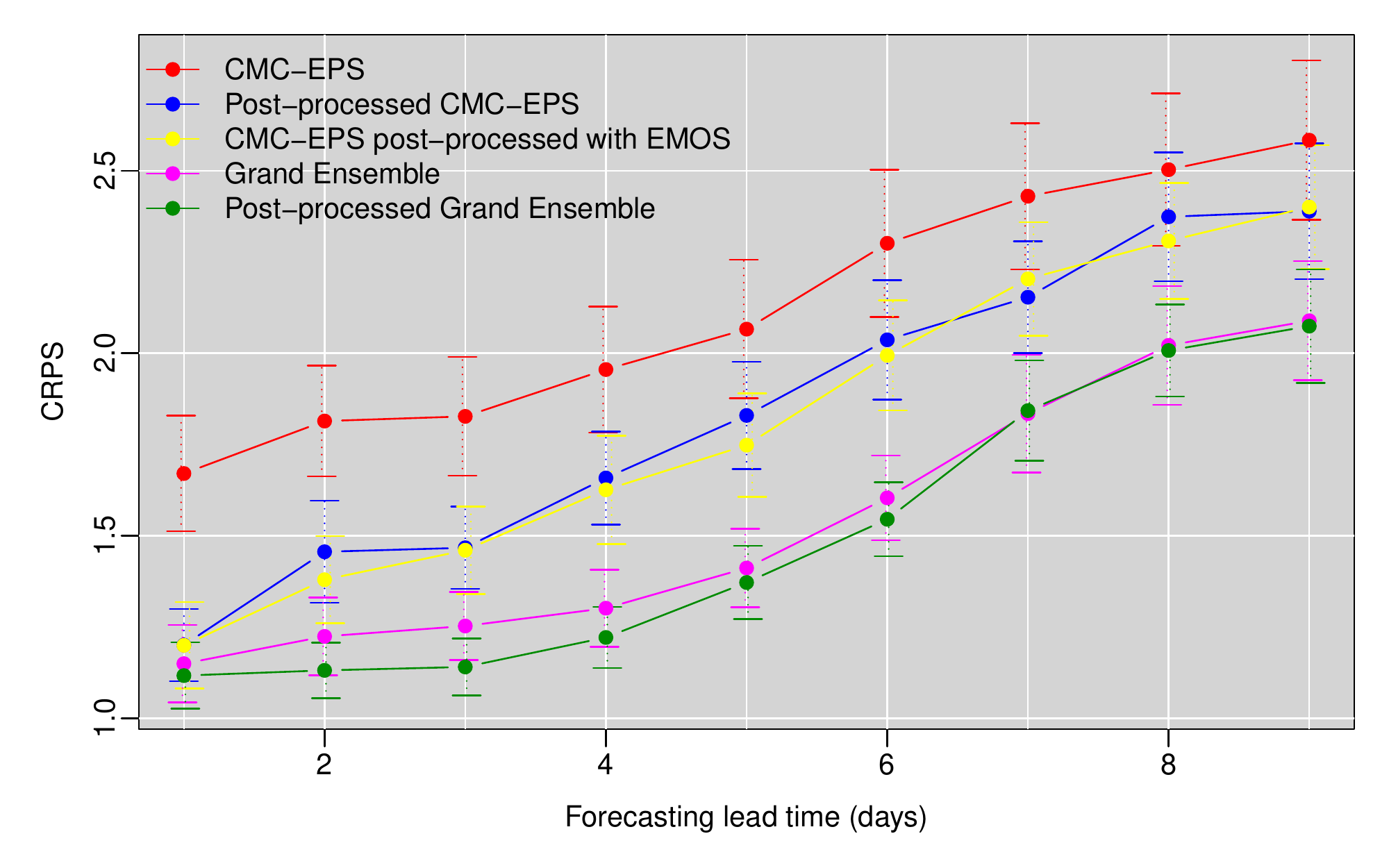}
\caption{CRPS (in $^{\circ}$C) for maximum temperature forecasts at Manic-2 catchment for all 9 lead times with associated bootstrap intervals. The parameters have been estimated using year 2013 and verification has been performed for year 2014.}
\label{Fig-CRPSManic2}
\end{figure}

\subsubsection*{Extending the results to all watersheds}

Based on rank histograms (not shown here), the post-processed maximal temperature daily ensemble forecasts for 2014 can be considered reliable regardless of the catchment and the lead time. According to the CRPS, post-processed ensembles are generally better than raw ensembles. Figure~\ref{Fig-CRPS5bassins} reports the average CRPS values as a function of lead times for each of the five watersheds studied. The main observations drawn from this figure are as follows:

\begin{itemize}
\item {CRPS values show that post-processing globally 
improves the reliability and accuracy of 
forecasts for all watersheds and almost all forecasting lead times. This observation applies to CMC-EPS forecasts (red curve compared to blue curve) as well as general forecasts from several sources (magenta curve compared to green curve). It is therefore in our best interest to post-process ensemble maximum daily temperature forecasts produced for the 5 watersheds of Manicouagan hydropower system.}

\item {According to the CRPS, combining several forecast sources, together with statistical post-processing, reduces the average error of forecast by at least $0.5^{\circ}$C, compared to a single source of raw ensemble. This result can make all the difference when producing hydrological predictions during seasonal transitions (frost in the fall, snowmelt in the spring) since temperature forecasting plays a crucial role at these times of the year.}

\item {Surprisingly, for Toulnustouc watershed, 1 and 2-days ahead forecasts of the post-treated large ensemble obtain CRPS values (green curve) that are greater than those of the corresponding raw ensemble (magenta curve). For the 1 day ahead prediction its performance is even worse than the single source CMC-EPS raw ensemble forecasts (red curve). 
This deterioration might stem from a problem of non-homogeneity in the dataset: the model parameters fitted on the learning sample may be not appropriate for the validation one.
For instance, in Quebec, 2014 was very cold in winter compared to 2013.}

\item {Our approach obtains slightly smaller CRPS values than that of the standard EMOS method for 4 watersheds: Manic~5, Petit Lac Manic, Toulnustouc and Manic~3 (blue curve compared to yellow curve).}

\end{itemize}

\begin{figure}[h!]
   \begin{minipage}[b]{0.33\linewidth}
      \centering \includegraphics[scale=0.48]{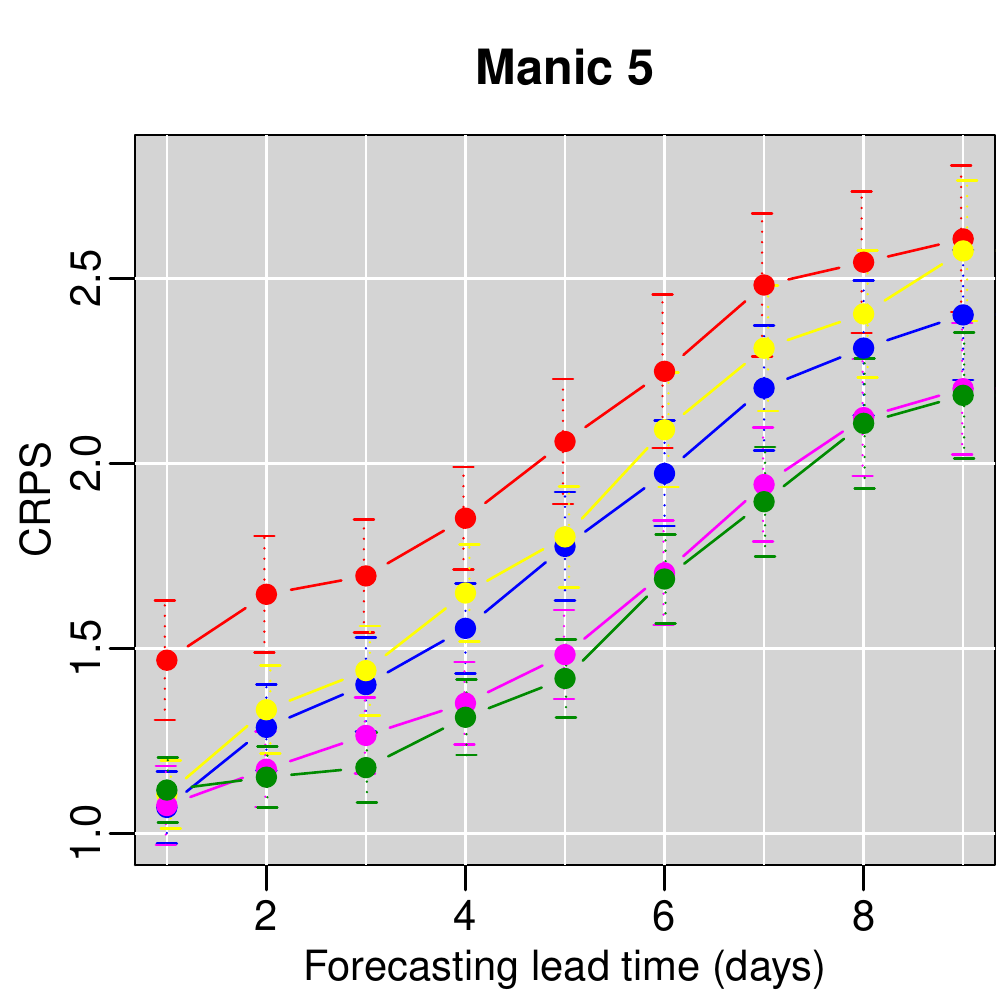}
   \end{minipage}\hfill
   \begin{minipage}[b]{0.33\linewidth}   
      \centering \includegraphics[scale=0.48]{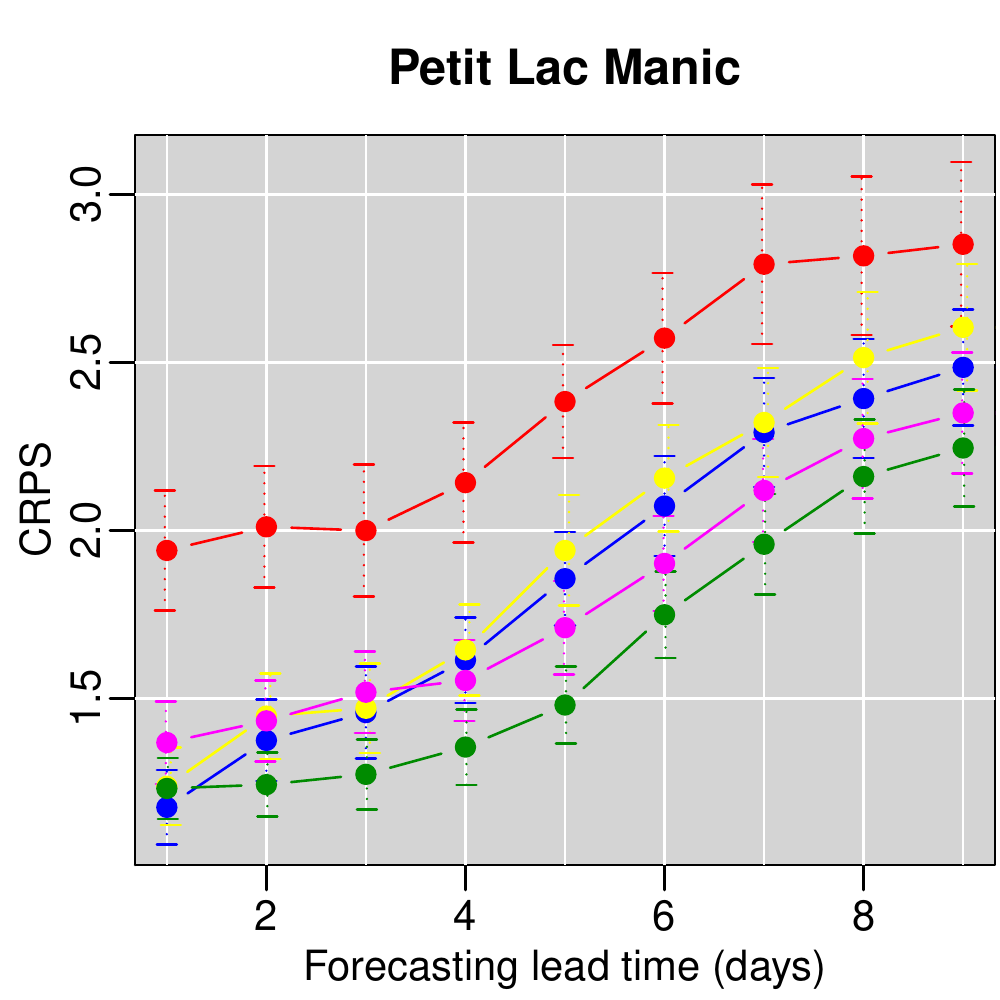}
   \end{minipage}
   \begin{minipage}[b]{0.33\linewidth}
      \centering \includegraphics[scale=0.48]{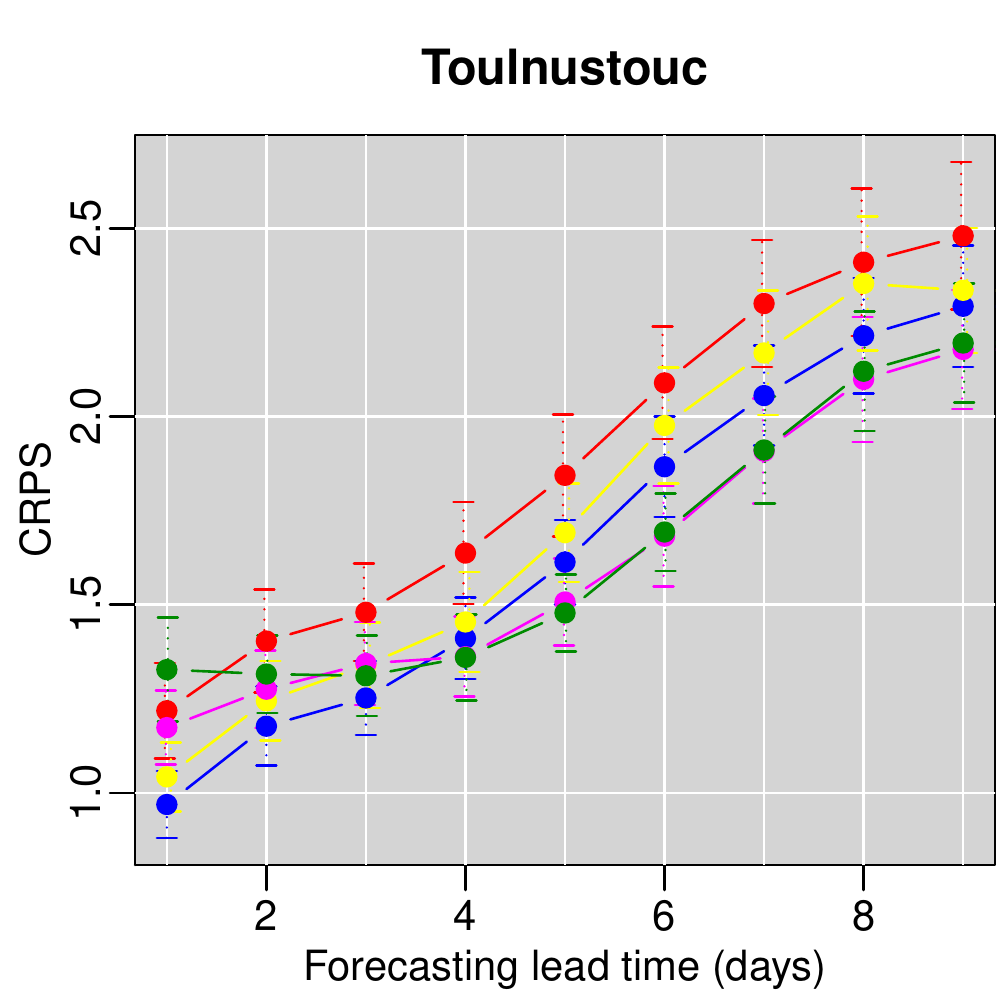}
   \end{minipage}\hfill
   \begin{minipage}[b]{0.33\linewidth}   
      \centering \includegraphics[scale=0.48]{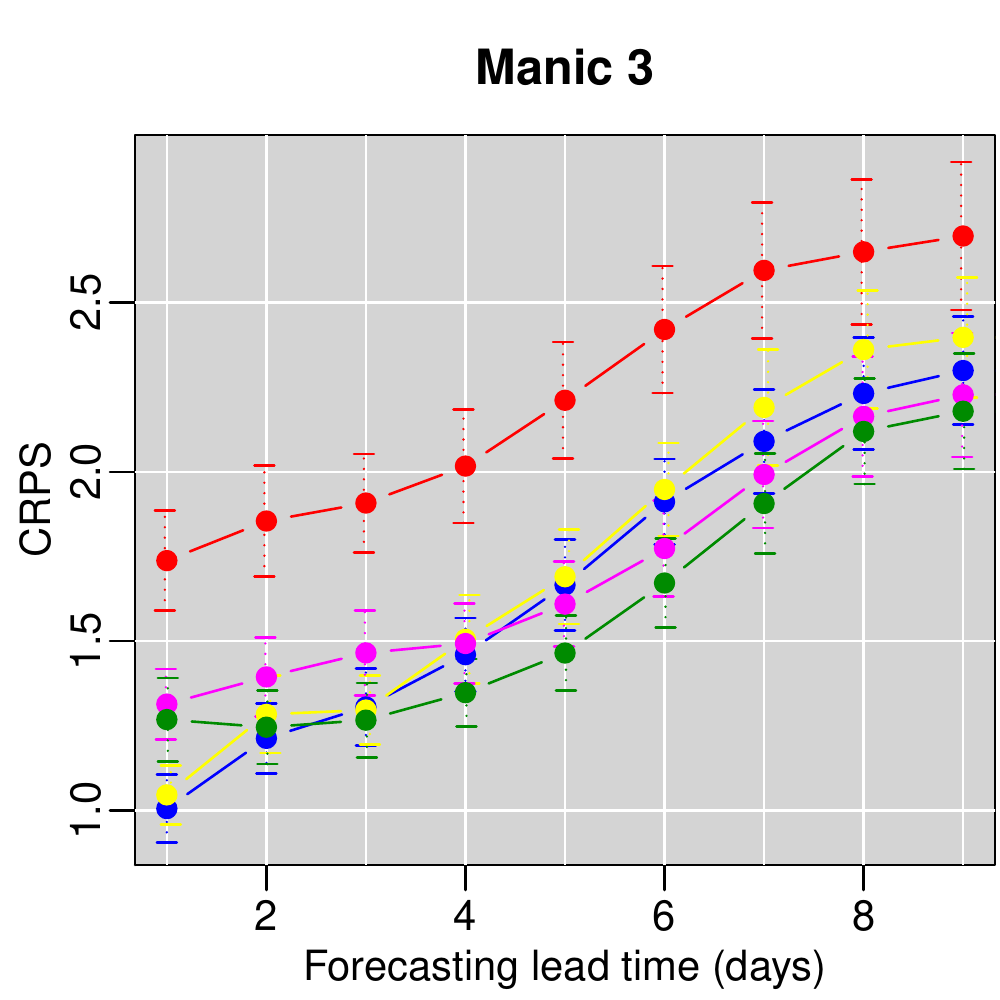}
   \end{minipage}
  \begin{minipage}[b]{0.33\linewidth}
      \centering \includegraphics[scale=0.48]{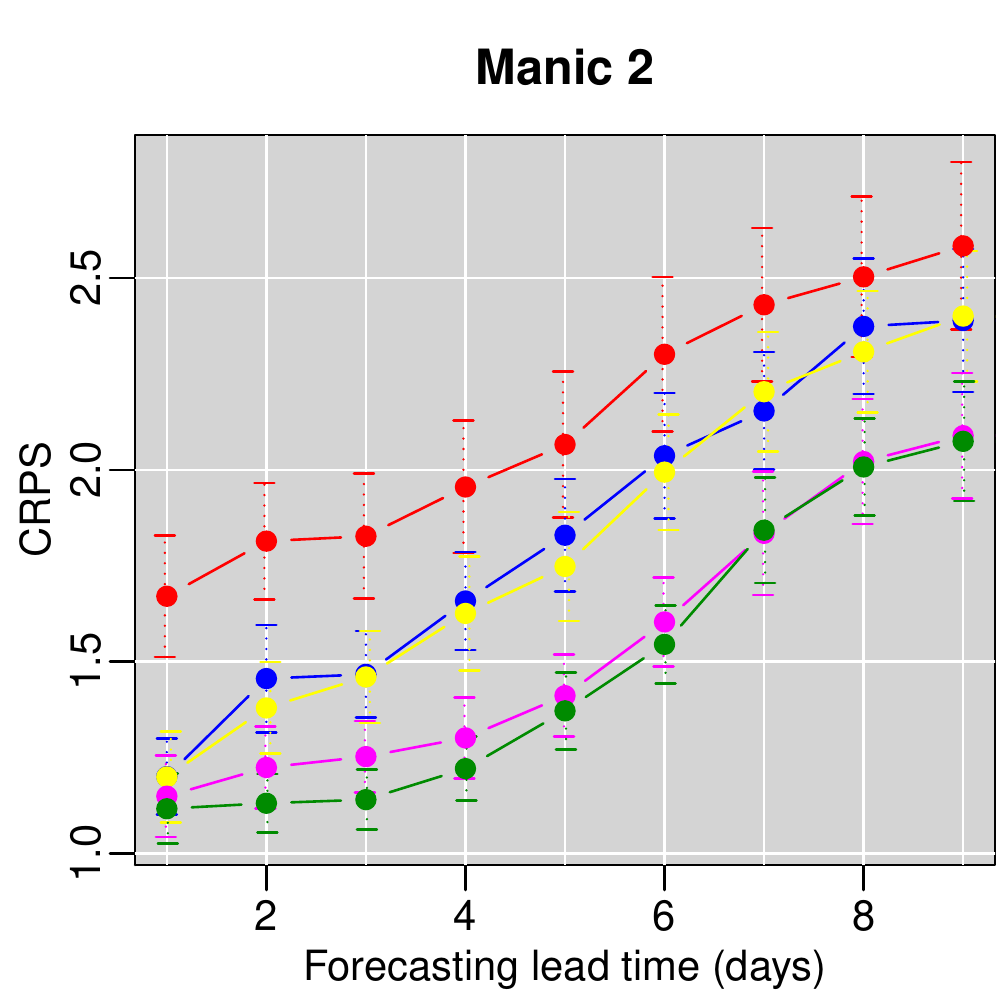}
   \end{minipage}\hfill
   \begin{minipage}[b]{0.33\linewidth}   
      \centering 
      \includegraphics[scale=0.53]{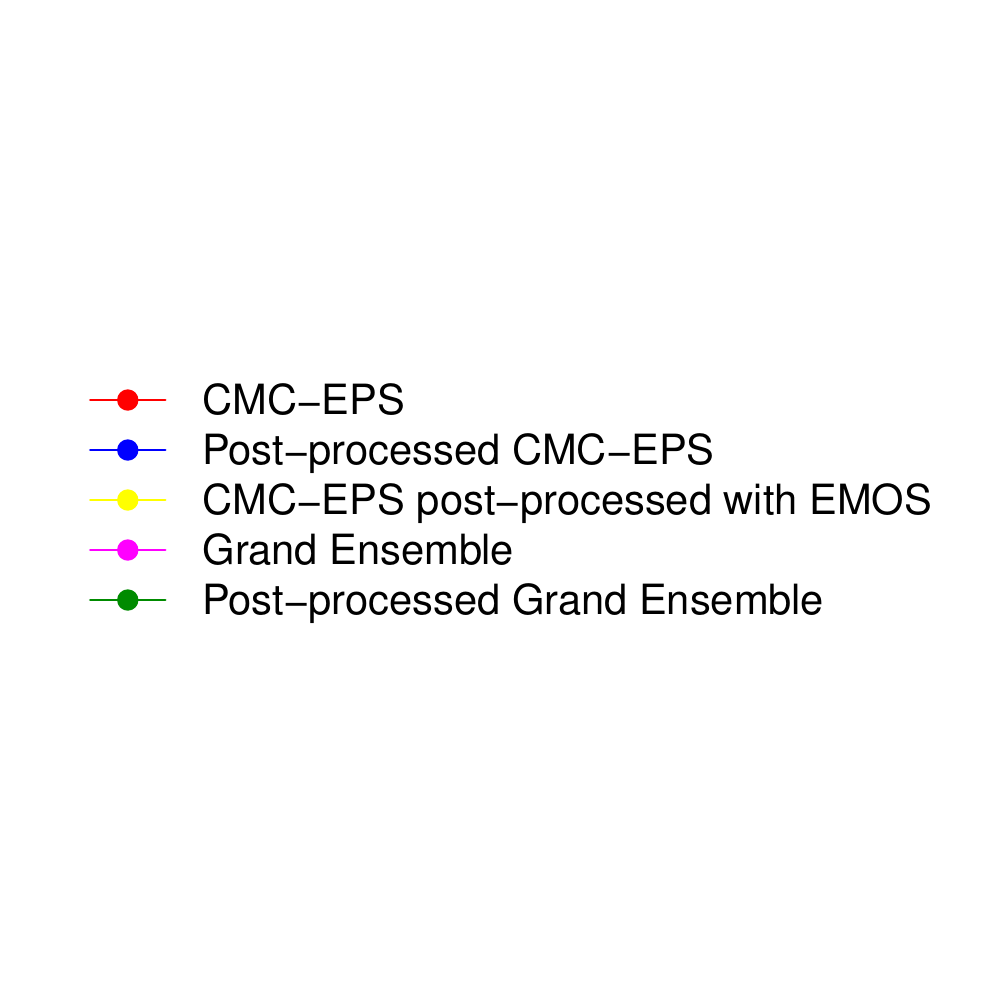}
   \end{minipage}

     \caption{CRPS (in $^{\circ}$C) for maximum temperature forecasts for the five catchments for all 9 lead times with associated bootstrap intervals. The parameters have been estimated using year 2013 and verification has been performed for year 2014. }  
     \label{Fig-CRPS5bassins}
\end{figure}

Finally, Figure~\ref{Fig-Poids} shows that observations made on 
the relative contributions of forecast sources for Manic~2 extend to the four other watersheds. This figure presents the relative contribution of each forecast source $e$, given by $ K_e\times contrib_e$, averaged over the $9$ forecast lead times. If the information provided by each ensemble were the same, the combined relative contribution of CMC-EPS and NCEP-GEFS forecasts would have been approximately $0.44$. But rather it reaches at most $0.35$, which means that the members of the CMC-EPS and NCEP-GEFS ensembles are under-weighted for the benefit of the ECMWF-EPS members.

\begin{figure}[!h]
\centering
 \includegraphics[scale=0.7]{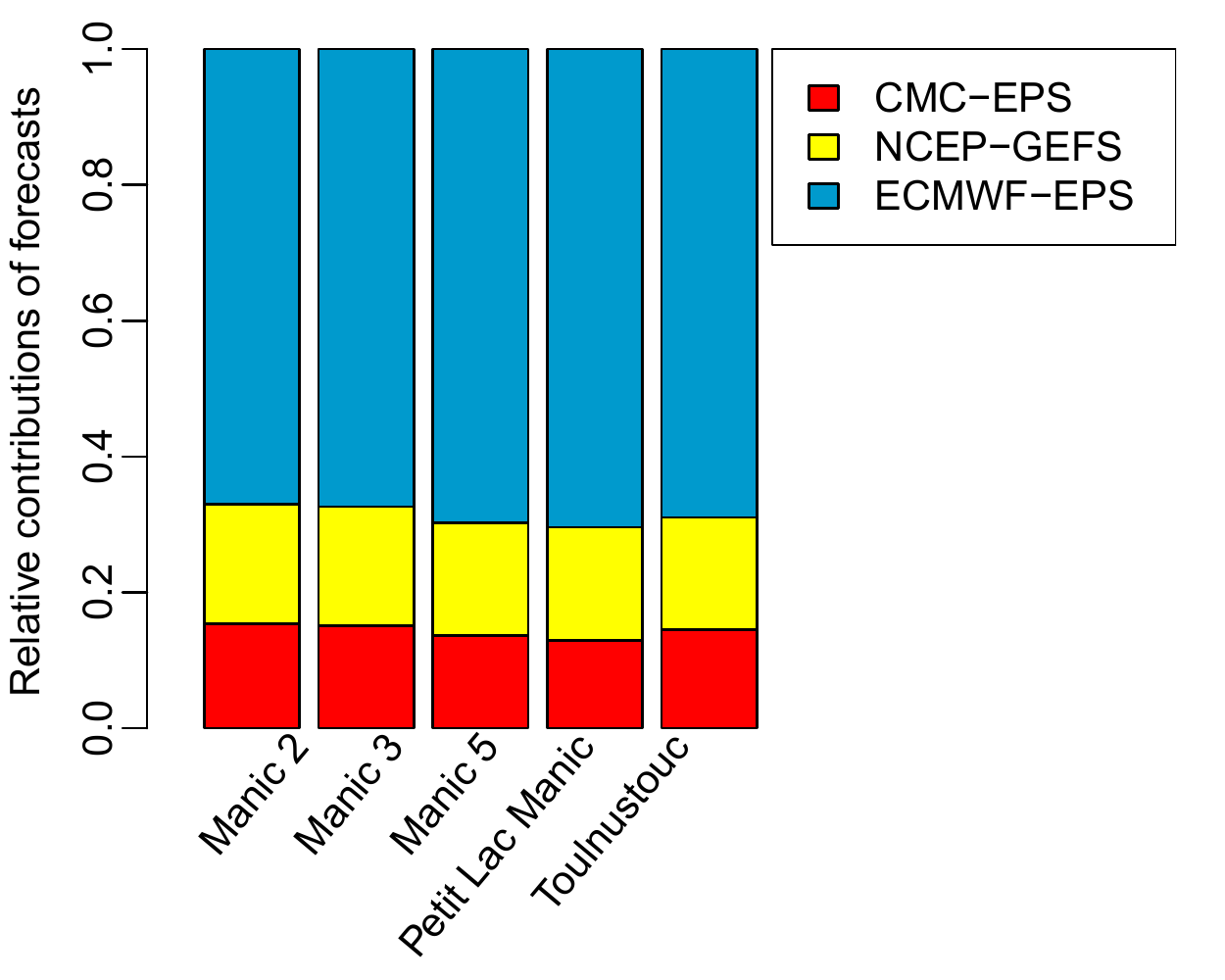} \caption{
Relative contributions of forecast sources according to the two-variable latent model averaged over the nine forecast lead times.}
 \label{Fig-Poids}
\end{figure}

\subsection{Precipitation forecasts}

\subsubsection*{Results on Manic 2 watershed}

Returning to Manic 2 watershed as an illustrative example, we now test our post-processing method for ensemble precipitation forecasts. As for temperatures, year 2013 was employed to estimate the parameters of the post-processing Tobit model, and daily forecasts were produced for 2014. 

The parameter of transformation, $\gamma$, is  estimated beforehand, the value obtained is $0.43$. 
Figure~\ref{Fig-ExManic2PrecipPrev} shows the resulting post-processed ensemble precipitation forecasts for our illustrative example of April 30th, 2014. As for temperatures, the 90 raw scenarios have been treated independently, lead time by lead time, the post-processed outputs being re-ordered using Ensemble Copula Coupling (\citep{schefzik2013uncertainty}). The color code used in this figure is the same as the one in Fig~\ref{Fig-ExManic2}.

\begin{figure}[!h]
\centering
 \includegraphics[scale=0.7]{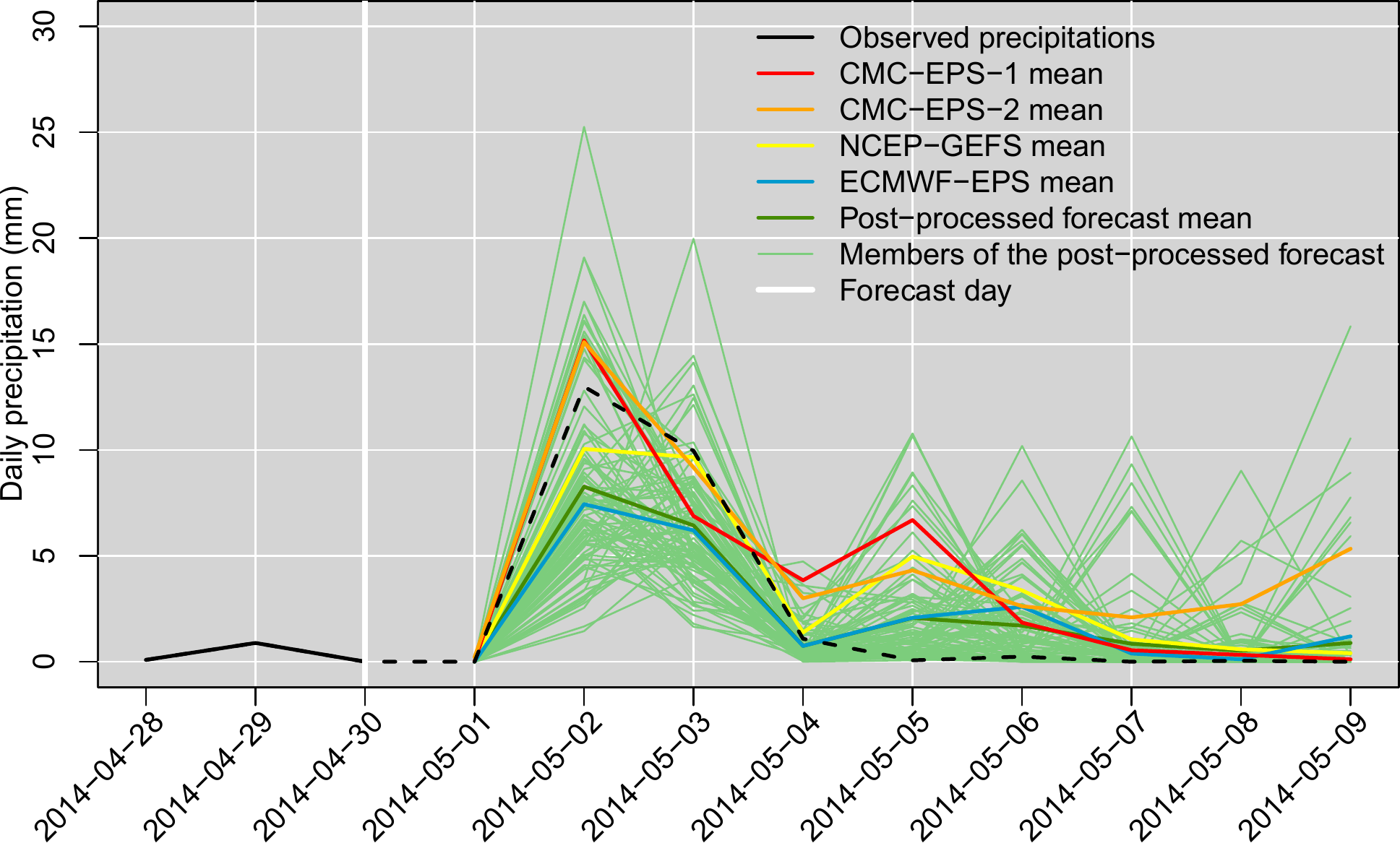}
 \caption{ 
 Example of daily precipitation forecasts for the Manic~2 watershed using the post-treatment method for NCEP-GEFS, CMC-EPS (odd~: 1 and even~: 2) and ECMWF-EPS ensembles.  Precipitation to be forecast is in dashed black.}
 \label{Fig-ExManic2PrecipPrev}
\end{figure}

Based upon visual examination of rank histograms, the post-processing method seems to lead to a better reliability compared to the original overall forecasts for the Manic 2 watershed. This gain is particularly marked for the shortest forecast deadlines for which under-dispersion and bias are more pronounced. As an illustration, the rank histograms for 3-days lead time ensemble forecasts have been presented in Fig~\ref{Fig-HistRgM2VLCens}. We see that raw ensembles (Fig~\ref{Fig-HistRgM2VLCens}--(1) to Fig~\ref{Fig-HistRgM2VLCens}--(4)) are clearly biased and underdispersive whereas post-processed ensembles show graphically good calibration (Fig~\ref{Fig-HistRgM2VLCens}--(5) and Fig~\ref{Fig-HistRgM2VLCens}--(6)). The rank histograms corresponding to the post-processed forecasts are indeed flatter than those associated with the raw ensemble forecasts. 

This gain in reliability is not reflected in the analysis of CRPS values. Figure~\ref{Fig-CRPSPrecip} shows the performance of the forecasts as measured by this scoring rule for each lead times. It is seen, for Manic 2 watershed, that post-processing does not improve significantly the raw ensemble precipitation forecasts. However, adding more ensembles is improving forecasts. In fact, CRPS values for the grand ensemble are systematically smaller than that of the CMC-EPS members.

\begin{figure}[!h]
\centering
 \includegraphics[scale=0.7,page=1]{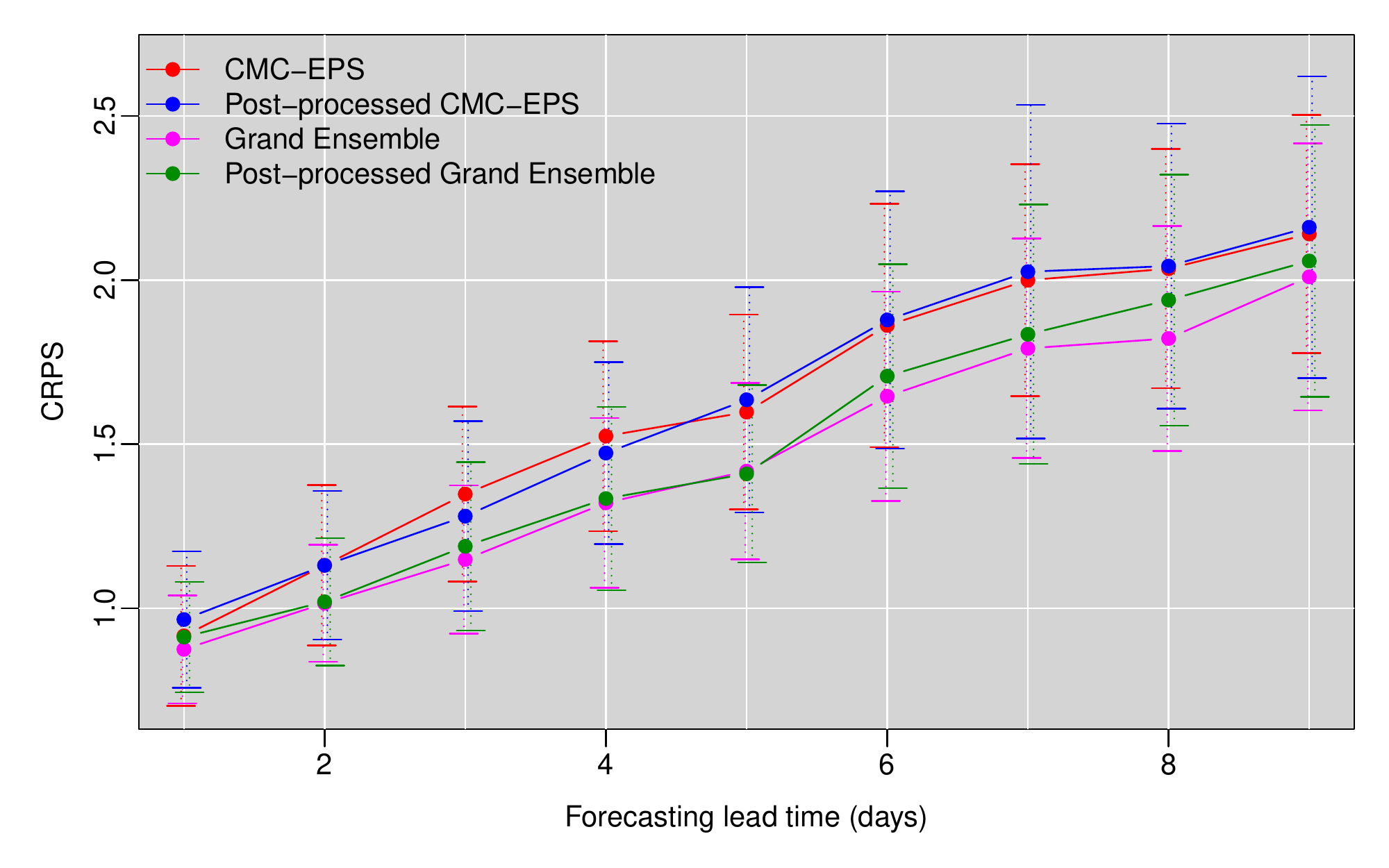}
 \caption{
CRPS (in mm) for 2014 post-processed ensemble precipitation forecasts for the Manic~2 watershed as a function of the lead times and with associated bootstrap intervals. Year 2013 is used to estimate the parameters.}
 \label{Fig-CRPSPrecip}
\end{figure}

\subsubsection*{Results on other watersheds}
For the other watersheds, the results obtained are much better: forecasts are generally slightly improved by the application of our post-treatment, particularly for small lead times, 1 up to 4 days (not shown). This is even more apparent when we consider cumulative precipitations forecasts, such as 
illustrated in Fig~\ref{Fig-cumCRPSPrecip}. In a hydrological forecasting perspective, cumulative precipitations are of particular importance since crucial decisions are taken based upon incoming inflow volumes.
\begin{figure}[!h]
\centering
 \includegraphics[scale=0.7]{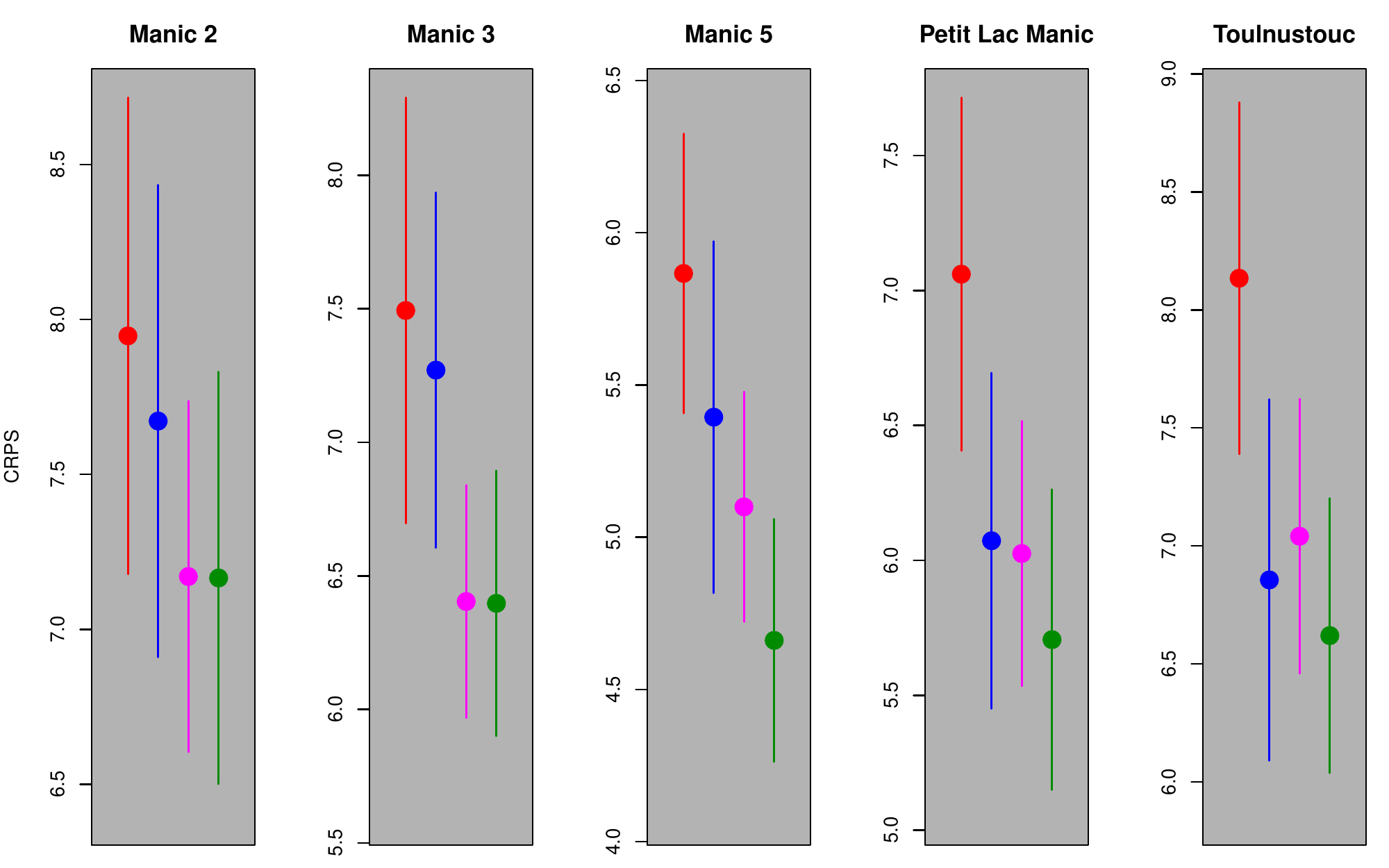}
  \includegraphics[scale=0.7]{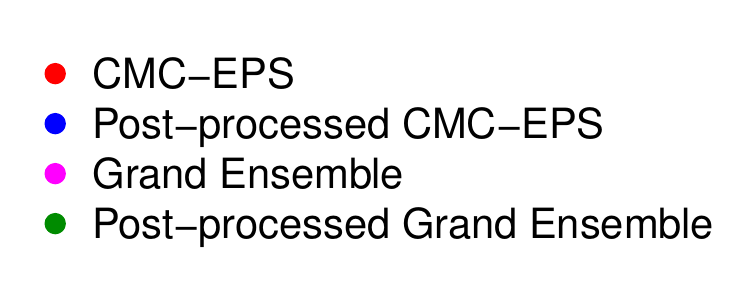}\\
 \caption{
CRPS (in mm) for 2014 post-processed 9-days ensemble cumulative precipitation forecasts and associated bootstrap intervals. Year 2013 is used to estimate the parameters.}
 \label{Fig-cumCRPSPrecip}
\end{figure}
CMC-EPS 9-days precipitation accumulation forecasts are improved by post-processing regardless of the watershed.
This is an argument in favour of a good reconstruction of the temporal dependency 
by the ECC-Q method, which has been used to get cumulative forecasts from 
the daily forecasts originally issued by our post-processing method.

The probabilistic forecasts produced using the raw large ensemble are always better than those of the CMC-EPS.
On the other hand, the forecasts for the grand ensemble totals are only improved 
for 3 out of 5 watersheds. 
This can be explained by the fact that 
the forecasts produced 
using the raw large ensemble 
already shows a good performance on those watersheds.

\section{Conclusion and discussion}
\label{sec:Conclusion-and-discussion}
Based on the concept of exchangeability, a desirable property of relabeling invariance for a meteorological ensemble, we have developed in Sections ~\ref{sec:Gaussian-case} and \ref{sec:Tobit-case} a constructive framework for  post-processing  members of multiple ensembles. This theoretically-justified mixed-effect structure gives an attractive physical interpretation to the latent variables underpinning the statistical model: they are the simple essential traits that the sophisticated numerical code mimicking the earth system model retains from the many simulations of the future weather that have been launched to give birth to the ensemble members. The model we propose also allows for the parsimonious integration of several sources of information: it can combine several ensembles produced by different meteorological centers and, eventually, deterministic weather forecasts resulting from meteorologists expertise.

We put this theoretical framework into operation for post-processing temperature forecasts, that are known for their Gaussian-like behavior as well as for precipitation forecasts, a rather more delicate statistical challenge. Due to the zero inflation of the latter distribution, we have made recourse in this case to an extension by truncation and power transform of the normal distribution. Inference methods rely on the EM and SEM algorithms to estimate the model parameters that maximize likelihood. We checked that the inference algorithms perform well on artificial data before applying the resulting post-processing methods to temperatures and precipitations data provided by Hydro-Québec.\\
Not surprisingly, the CRPS is almost systematically improved 
for all case studies after performing the post-treatments. 
Our statistical treatment affects both components of this scoring rule, calibration and sharpness. Calibration is improved because our fitted model has straightened the statistical features of the ensemble. Sharpness 
should be improved because multiple sources of information have been combined.

When adapted to precipitation ensembles by including an additional layer in the multilevel model, the method is not yet entirely satisfactory: by means of a normalized transformation of precipitation, we succeed in providing  
well calibrated forecasts, but their CRPS performances are 
a little disappointing 
when considering the multi-ensemble post-processing on 2 of the 5 considered watersheds, 
because the forecasts obtained by considering the ensemble forecasts 
all together as a grand ensemble already shows a good performance on those watersheds. 
\new{However, in an operational context, obtaining all the ensemble weather forecasts forming the grand ensemble in real time can be complicated. Hence the interest of a method providing satisfying performances from a subset of the ensembles involved, as the proposed method can do.}
\new{Moreover, the use of a moving (and/or seasonal) training period, as did \cite{taillardat2016calibrated} for instance, could be a way to get enhanced performances by better taking into account non-stationarities.} 

\new{Inference could have been performed under the Bayesian paradigm: the Bayesian interpretation of predictive conditional distributions as a personal probabilistic judgment \citep{lindley2013understanding} is most appropriate for operational forecasters.
We nevertheless chose the EM approach for inference of the models with latent layers 
since Bayesian solutions often require
more computational burden.
In addition, the Bayesian approach really brings much improvement when the experts' beliefs can be encoding, via a proper elicitation phase, to take into account supplementary information not conveyed by the data, but this is a lengthy and difficult modeling task \citep{o2005research} not developed in this paper. }

\new{Although not Bayesian, the} proposed method is able to integrate expert's forecasts as a source of additional information (it suffices to consider it as a peculiar ensemble with a single member), provided some stationarity of the process of deterministic forecast expertise (the same expert or a cohesive team of forecasters) during the learning period. This happens to be the case, for example, with Hydro-Québec. This would give forecasters, wishing to keep on working with their best single estimate of the future weather, the opportunity to share their vision of the meteorologic phenomenon within a multi-ensemble aggregation process.\\

Post-processing the meteorological ensembles to get reliable probabilistic weather forecasts is only the beginning of the story: water in the river, not in the air,  is the product of interest for hydropower companies such as HydroQuebec. As a consequence, one should also account for the uncertainty stemming from the inexact code representation of the rainfall-runoff transformation. Otherwise, some underdispersion of the probabilistic waterflow forecasts may hamper their values with regards to dam operation management. As a practical consequence, an additional statistical post-processing method is be 
used on the outputs of the rainfall-runoff model \citep{courbariaux2017water}.

Moreover, we have post-processed the meteorological ensembles for a single variable of interest, a chosen location and a given lead time. Here, as usual in the ensemble community, we simply rely on empirical copulas to restore coherence in space, time and between the multiple variables. 
Making recourse to empirical copula to retrieve a joint multivariate structure does not go without an unfortunate drawback: the number of predictive simulations cannot be greater than the size of the original ensemble.
Moreover, 
had we post-processed an ensemble for maximum and minimum daily temperatures, the method could not guarantee that the predicted maximum daily temperatures will always remain higher than the corresponding minimum daily temperatures since these two quantities of interest would have been processed independently.
A natural remedy might be to keep on taking advantage of the Gaussian properties and try to model the multivariate ensemble with many lead times and several locations as a huge Gaussian process\new{, with an autoregressive effect in the latent variable related to the mean of the ensemble}. 
This is of course the unattainable Grail \new{(the inference would be much tougher)}, but looking forward to it can help patching additional salient traits of ensembles to fruitful variations of the simple models proposed in this paper.

\section{Acknowledgements}
This work was supported by \'Electricit\'e de France and by Hydro-Qu\'ebec
	[research grant number 694R] through the PhD. thesis of M. Courbariaux.
	We would like to thank Jacques Bernier, Joël Gailhard, Anne-Catherine Favre and Vincent Fortin for their unfailing help and constructive comments regarding this work.	The forecasting and development teams at EDF-DTG
	and Hydro-Qu\'ebec have provided the necessary material
	and case studies as well as many valuable advises:
	we thank in particular Fabian Tito Arandia Martinez and 
    Éric Crobeddu from Hydro-Québec, Fabien Rinaldi and Rémy Garçon from EDF-DTG.
\bibliographystyle{abbrvnat}

\appendix

\section{Details on inference in the Gaussian case}
\label{sec:appendice:inference}

\paragraph{E-step}

We need to compute the conditional distributions
$\left[Z_t|\omega_t^{-2},\mathbf{X}_t,Y_t\right]$
and $\left[\omega_t^{-2}|\mathbf{X}_t,Y_t\right]$  
\new{for each time $t$ of the training set}.
We interpret the joint distribution \new{at time $t$}
\begin{align*}
\left[Z_t,\omega_t^{2},\mathbf{X}_t,Y_t\right] & =[\mathbf{X}_t,Y_t|Z_t,\omega_t^{-2}][\omega_t^{-2},Z_t]\\
 & =[\mathbf{X}_t|Z_t,\omega_t^{-2}][Y_t|Z_t,\omega_t^{-2}][\omega_t^{-2},Z_t]
\end{align*}
as a function of $\left(\omega_t^{-2},Z_t\right),$
and try to recognize the probability distribution function (pdf) of $\left(Z_t,\omega_t^{-2}|\mathbf{X}_t,Y_t\right)$
up to a multiplicative constant, since
$\left[Z_t,\omega_t^{-2}|\mathbf{X}_t,Y_t\right]=[\omega_t^{-2},Z_t,\mathbf{X}_t,Y_t]\times\left(\frac{1}{[\mathbf{X}_t,Y_t]}\right)$).

The complete deviance (minus twice the complete loglikelihood) \new{at time $t$} can be written as a quadratic form in $\mathbf{X}_t$ and $Y_t$, up to known normalizing constants:
\begin{multline}
\sum_{e=0}^{E}\left\{ \sum_{k=1}^{K_{e}}(X_{e,k,t}-b_{e}Z_t-a_{e})^{2}\omega_t^{-2}c_{e}^{-2}-K_{e}\log\left(\omega_t^{-2}\right)-K_{e}\log\left( c_{e}^{-2}\right)\right\} \\
\qquad+Z_t^{2}\lambda^{-1}\omega_t^{-2}-\log(\lambda^{-1})-\log\left(\omega_t^{-2}\right)+2\beta\omega_t^{-2}-2(\alpha-1)\log\left(\omega_t^{-2}\right)-2\log\left(\frac{\beta^{\alpha}}{\Gamma(\alpha)}\right)\,.\label{eq:deviancemono-1}
\end{multline}

The pdf we are looking for can be further decomposed as:
\[
\left[Z_t,\omega_t^{-2}|\mathbf{X}_t,Y_t\right]=[Z_t|\omega_t^{-2},\mathbf{X}_t,Y_t][\omega_t^{-2}|\mathbf{X}_t,Y_t]\,.
\]
We now check if we can still benefit from a conjugate situation, i.e. given $\left(\mathbf{X}_t,Y_t\right)$ we would still get a normal pdf for $\left(Z_t|\omega_t^{-2},\mathbf{X}_t,Y_t\right)$ 
under the form $\mathcal{N}(m^{\prime},\lambda^{\prime}\omega_t^{2})$ 
and a gamma pdf for $\left(\omega_t^{-2}|\mathbf{X}_t,Y_t\right)$, 
$\Gamma\left(\alpha^{\prime},\beta^{\prime}\right)$.
Expressed as a function of $[Z_t,\omega_t^{-2}|\mathbf{X}_t,Y_t]$, the deviance exhibits the following shape:
\[
(Z_t-m_t^{\prime})^{2}\lambda^{\prime-1}\omega_t^{-2}-\log(\omega_t^{-2})-\log(\lambda^{\prime-1})-2(\alpha^{\prime}-1)\log(\omega_t^{-2})+2\beta_t^{\prime}\,(\omega_t^{-2})\,.
\]
We proceed by trying to identify parameters $\alpha^{\prime},\beta_t^{\prime},\lambda^{\prime},m_t^{\prime}$
in the above equation to match the deviance for their joint distribution given by Eq~\eqref{eq:deviancemono-1}.

By matching both expressions, we obtain:
\begin{equation*}
\begin{array}{lcl}
\lambda^{\prime -1}&=&\sum_{e=0}^{E}K_{e}b_{e}^{2}c_{e}^{-2}+\lambda^{-1},\\
m_t^{\prime}&=&\lambda^{\prime}\cdot\sum_{e=0}^{E}c_{e}^{-2}b_{e}K_{e}\left(\bar{X}_{e,t}-a_{e}\right)\\
\alpha^{\prime}&=&\alpha+\frac{\sum_{e=0}^{E}K_{e}}{2},\\
\beta_t^{\prime}&=&\beta+\frac{1}{2}\left\{\sum_{e=0}^{E}\sum_{k=1}^{K_{e}}c_{e}^{-2}(X_{e,k,t}-a_{e})^{2}-m_t^{\prime2}\lambda^{\prime-1}\right\},
\end{array}
\end{equation*}
where
$\bar{X}_{e,t}=\frac{1}{K_{e}}\sum_{k=1}^{K_{e}}X_{e,k,t}$. 
Therefore, we are in a conjugate situation since the conditional pdf $\left[Z_t,\omega_t^{-2}|\mathbf{X}_t,Y_t\right]$ is in the normal-gamma model as is the marginal pdf $\left[Z_t,\omega_t^{-2}\right]$.

Denoting $\phi(\cdot)$ the first derivative of function  $\log\left\{\Gamma(\cdot)\right\}$, the moments necessary for performing the E-step are:
\begin{equation*}
\begin{array}{lcl}
\mathbb{E}\left(\log(\omega_t^{-2})|\mathbf{X}_t,Y_t\right) & =&-\log(\beta_t^{\prime})+\phi(\alpha^{\prime})\,,\\
\mathbb{E}(\omega_t^{-2}|\mathbf{X}_t,Y_t)  &=&\frac{\alpha^{\prime}}{\beta_t^{\prime}}\,,\\
\mathbb{E}(Z_t^{2}\omega_t^{-2}|\mathbf{X}_t,Y_t)  &=&\lambda^{\prime}+m_t^{\prime2}\frac{\alpha^{\prime}}{\beta_t^{\prime}}\,,\\
\mathbb{E}(Z_t \omega_t^{-2}|\mathbf{X}_t,Y_t) & =&m_t^{\prime}\frac{\alpha^{\prime}}{\beta_t^{\prime}}\,.
\end{array}
\end{equation*}

\paragraph{M-step}
We write the complete deviance $D(\btheta)=D(\alpha,\beta,\lambda,\ba,\bb,\bc)$, denoting $n$ the number of records in the data set (each of them indexed by $t$):
\begin{align*}
\begin{aligned}D(\btheta) & =
\sum_{t=1}^{n} \Bigg\{ Z_{t}^{2}\lambda^{-1}\omega_{t}^{-2}-\log(\lambda^{-1})-2\alpha\log(\omega_{t}^{-2})+2\beta\omega_{t}^{-2}-2\alpha\log(\beta)\\
 & \qquad+2\log\left\{\Gamma(\alpha)\right\}\\
 & \qquad+ \sum_{e=0}^{E}\left\{ 
 \sum_{k=1}^{K_{e}}\left(X_{e,k,t}-a_{e}-b_{e}Z_{t}\right)^{2}c_{e}^{-2}\omega_{t}^{-2}-\log\left(c_{e}^{-2}\right)
 \right\} \Bigg\}+ \textrm{Cst} \,,
\end{aligned}
\end{align*}
where $\textrm{Cst}$ is a constant term with respect to the parameters to estimate.

First, the expectation of $D(\btheta)$ is computed by using the moments computed in the E-step.
Then, this expectation is differentiated with respect to the parameters to be updated.
This leads to the following explicit update formulas, the subscript $new$ indicates the new value of the parameter:

\begin{equation*}
\begin{array}{lcl}
b_{e,new} & =&\frac{\frac{D_{e}}{B}-\frac{C_{e}}{G}}{\frac{G}{B}-\frac{H}{G}-\frac{n\lambda^{\prime}}{G\alpha^{\prime}}}\quad \textrm{ for } e\in\{1,\ldots,E\},\\
a_{e,new} & =&\frac{D_{e}}{B}-b_{e,new}\frac{G}{B} \quad \textrm{ for } e\in\{0,\ldots,E\},\\
c_{e,new}^{2}&=&K_{e}b_{e,new}^{2}\lambda^{\prime}+\frac{1}{n}\sum_{t=1}^{n}\frac{\alpha^{\prime}}{\beta_{t}^{\prime}}\sum_{k=1}^{K_{e}}\left(X_{e,k,t}-a_{e,new}-b_{e,new}m_{t}^{\prime}\right)^{2}\quad \textrm{ for } e\in\{1,\ldots,E\},\\
\lambda_{new}&=&\lambda^{\prime}+\frac{\alpha^{\prime}}{n}\sum_{t=1}^{n}\frac{m_{t}^{\prime2}}{\beta_{t}^{\prime}},\\
\beta_{new} & =&\frac{n\alpha_{new}}{\alpha^{\prime}\sum_{t=1}^{n}\frac{1}{\beta_{t}^{\prime}}},
\end{array}
\end{equation*}
where $G=\sum_{t=1}^{n}\frac{m_{t}^{\prime}}{\beta_{t}^{\prime}}$,
$B=\sum_{t=1}^{n}\frac{1}{\beta_{t}^{\prime}}$, $C_{e}=\sum_{t=1}^{n}\frac{m_{t}^{\prime}\bar{X}_{e,t}}{\beta_{t}^{\prime}}$,
$D_{e}=\sum_{t=1}^{n}\frac{\bar{X}_{e,t}}{\beta_{t}^{\prime}}$ et $H=\sum_{t=1}^{n}\frac{m_{t}^{\prime2}}{\beta_{t}^{\prime}}$.
For updating $\alpha$, we use a numeric solver of the following equation:
$$\log\left(\frac{n\alpha_{new}}{\alpha^{\prime}\sum_{t=1}^{n}\frac{1}{\beta_{t}^{\prime}}}\right)-\phi(\alpha_{new}) 
=\frac{1}{n}\sum_{t=1}^{n}\log(\beta_{t}^{\prime})-\phi(\alpha^{\prime}).$$

\end{document}